\newcommand{\qlb}{\mathrm{QLB}}
\newcommand{\lb}{\mathrm{LB}}
\newcommand{\itlb}{\mathrm{ITLB}}
\newcommand{\QH}{\mathrm{QH}}
\renewcommand{\H}{\mathrm{H}}
\newcommand{\sort}{\textsc{Sort}}
\newcommand{\C}{{\cal C}}
\newcommand{\ket} [1] {\vert #1 \rangle}
\newcommand{\bra} [1] {\langle #1 \vert}
\newcommand{\braket}[2]{\langle #1 | #2 \rangle}
\newcommand{\ketbra}[2]{\ket{#1}\bra{#2}}
\newcommand{\Exp}[2]{\mathbf{E}_{#1}\left[#2\right]}
\newcommand{\norm}[1]{\left\lVert#1\right\rVert}
\newcommand{\Prob}[2]{\mathrm{Prob}_{#1}\left[#2\right]}
\newcommand{\adv}{\mathrm{Adv}}
\newcommand{\abs}[1]{\mid\! #1\! \mid}
\definecolor{darkred}{rgb}{.4,0,0}
\renewcommand{\em}{\color{darkred}}
\newtheorem{theorem}{Theorem}
\newtheorem{lemma}{Lemma}
\newtheorem{corollary}{Corollary}
\tikzstyle{vertex}=[circle,draw=black,fill=orange!40,minimum size=12pt,inner sep=1pt]
\tikzstyle{point}=[circle,draw=black,fill=blue,minimum size=3pt,inner sep=1pt]
\tikzstyle{edge} = [draw,thick,-]
\tikzstyle{dashed edge} = [draw,thick,dashed,-]
\tikzstyle{flip} = [draw,line width=3pt,->]
\tikzstyle{arrow} = [draw,thick,->]
\tikzstyle{back} = [dotted]
\tikzstyle{facet} = [fill=yellow,fill opacity=0.2]
\tikzstyle{bluefacet} = [fill=blue,fill opacity=0.4]
\title{Information-theoretic lower bounds\\
  for quantum sorting}
\author{Jean Cardinal \and Gwena\"el Joret \and J\'er\'emie Roland}
\date{Universit\'e libre de Bruxelles (ULB)\\
\today}
\begin{document}
\maketitle
\sloppy
\begin{abstract}
We analyze the quantum query complexity of sorting under partial information.
In this problem, we are given a partially ordered set $P$ and are asked to identify a linear extension
of $P$ using pairwise comparisons.
For the standard sorting problem, in which $P$ is empty, it is known that the quantum query complexity is not asymptotically
smaller than the classical information-theoretic lower bound.
We prove that this holds for a wide class of partially ordered sets, thereby improving on a result from Yao (STOC'04).
\end{abstract}

\section{Introduction}

Sorting by comparison is a well-studied computational problem in which a permutation $\sigma$ of $n$ elements is to be identified by asking
questions of the form ``is $\sigma (i)\leq \sigma (j)$?''. The complexity of a sorting algorithm is the number of such comparisons it performs in the worst case as a function of $n$. Optimal algorithms are known, solving the problem using $O (n\log n)$ comparisons.
We consider a generalization of the sorting problem, called {\em sorting under partial information}, in which we are given a partially ordered set $P$, and the goal is to identify a permutation $\sigma$ such that $i\leq_P j\implies \sigma(i)\leq \sigma(j)$.
Such a permutation is called a {\em linear extension} of $P$.
Here $P$ is a given partial information on the sought permutation, and can be thought of as a set of comparisons whose outcomes are already known.
An illustration is given in Figure~\ref{fig:example}.
We denote this problem by $\sort_P$. It generalizes many standard comparison-based problems such as insertion in a sorted list, merging sorted lists, and sorting elements from a static data structure such as a heap.

\begin{figure}
\begin{center}
\begin{tikzpicture}[scale=.7,auto,swap]
\path[flip] (3.5,2.5) -- (4.2,2.5);
\path[flip] (7.5,2.5) -- (8.2,2.5);
\path[flip] (11.5,2.5) -- (12.2,2.5);
\path[flip] (15.5,2.5) -- (16.2,2.5);
\begin{scope}[xshift=0cm]
\foreach \pos/\name/\outdeg in {{(1,4)/a/a}, {(3,4)/b/b}, {(3,2)/c/c}, {(1,3)/d/d}, {(1,2)/e/e}, {(2,1)/f/f}}
   \node[vertex] (\name) at \pos {$\outdeg$};
\foreach \source/\dest in {a/d,d/e,e/f,c/f,b/c,b/e}
   \path[edge] (\source) -- (\dest);
\path[dashed edge] (a) -- (b);
\end{scope}
\begin{scope}[xshift=4cm]
\foreach \pos/\name/\outdeg in {{(2,4)/a/a}, {(3,3)/b/b}, {(3,2)/c/c}, {(1,3)/d/d}, {(1,2)/e/e}, {(2,1)/f/f}}
   \node[vertex] (\name) at \pos {$\outdeg$};
\foreach \source/\dest in {a/d,a/b,d/e,e/f,c/f,b/c,b/e}
   \path[edge] (\source) -- (\dest);
\path[dashed edge] (d) -- (b);
\end{scope}
\begin{scope}[xshift=8cm]
\foreach \pos/\name/\outdeg in {{(2,5)/a/a}, {(2,4)/b/b}, {(3,2)/c/c}, {(1,3)/d/d}, {(1,2)/e/e}, {(2,1)/f/f}}
   \node[vertex] (\name) at \pos {$\outdeg$};
\foreach \source/\dest in {a/b,b/c,b/d,d/e,e/f,c/f}
   \path[edge] (\source) -- (\dest);
\path[dashed edge] (e) -- (c);
\end{scope}
\begin{scope}[xshift=12cm]
\foreach \pos/\name/\outdeg in {{(2,5)/a/a}, {(2,4)/b/b}, {(3,3)/c/c}, {(1,3)/d/d}, {(1,2)/e/e}, {(2,1)/f/f}}
   \node[vertex] (\name) at \pos {$\outdeg$};
\foreach \source/\dest in {a/b,b/c,b/d,d/e,e/f,c/e}
   \path[edge] (\source) -- (\dest);
\path[dashed edge] (d) -- (c);
\end{scope}
\begin{scope}[xshift=16cm]
\foreach \pos/\name/\outdeg in {{(1,5)/a/a}, {(1,4)/b/b}, {(1,3)/c/c}, {(1,2)/d/d}, {(1,1)/e/e}, {(1,0)/f/f}}
   \node[vertex] (\name) at \pos {$\outdeg$};
\foreach \source/\dest in {a/b,b/c,c/d,d/e,e/f}
   \path[edge] (\source) -- (\dest);
\end{scope}
\end{tikzpicture}
\end{center}
\caption{\label{fig:example}An instance of the problem of sorting under partial information. Here four successive comparisons
suffice to unveil the underlying total order.}
\end{figure}
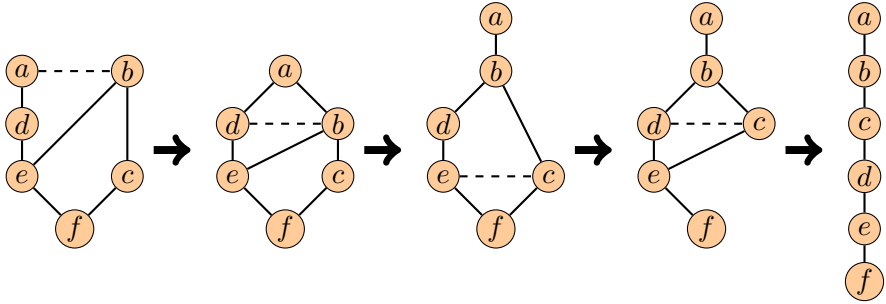

The problem has a long history, dating back to a seminal 1976 paper from Michael Fredman~\cite{F76}, 
and has found practical applications~\cite{DM18,HKSL19}.
There exist optimal algorithms performing $O(\log (|\Delta (P)|))$ comparisons, where
$\Delta(P)$ is the set of linear extensions of $P$. This is a consequence of the existence of so-called balanced pairs
in partial orders~\cite{KL91,BFT95,B99}. It is also known that an optimal sequence of comparisons can be found
in polynomial time~\cite{KK95,CFJJM13}. 
The expression $\log (|\Delta (P)|)$ is often referred to as the {\em information-theoretic lower bound}. 

We study the quantum query complexity of $\sort_P$, the minimum number of comparisons performed by any quantum decision tree
solving $\sort_P$. We refer the reader to the survey of Buhrman and de Wolf~\cite{BW02} for the definition of quantum decision trees.

For the standard sorting problem, H\o{}yer, Neerbek, and Shi~\cite{HNS02} proved that the quantum query complexity is bounded
from below by a constant times the information-theoretic lower bound $\log_2 (n!)$. Hence when $P$ is empty, no asymptotic quantum speedup
is achievable for $\sort_P$.
They also showed that the information-theoretic lower bound holds for the ordered search problem, another special case of $\sort_P$ in which $P$ is composed of a chain and an isolated element. The analysis of the quantum lower bound for ordered search was further refined by Childs and Lee~\cite{CL08}.

At STOC'04~\cite{Y04}, Yao gave the following lower bound for $\sort_P$.
\begin{theorem}[Yao~\cite{Y04}]
\label{thm:yao}
The quantum query complexity of sorting under partial information, given a poset $P$ on $n$ elements, is at least
$c\log (|\Delta (P)|) - c'n,$
for some universal constants $c,c'>0$.
\end{theorem}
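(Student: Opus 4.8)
The plan is to locate, inside $\sort_P$, \emph{standard} sorting instances on which the H\o{}yer--Neerbek--Shi (HNS) lower bound can be invoked, and then to show that the entropy $\log|\Delta(P)|$ is, up to an additive $O(n)$, already accounted for by the sizes of those instances. Two reductions are used in tandem: sorting a single maximum antichain of $P$, and sorting \emph{independently and in parallel} the levels of the Mirsky (height) decomposition of $P$. The additive $-c'n$ in the statement will come out of the combinatorial bookkeeping, not out of the quantum argument.

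\textbf{Two embeddings of standard sorting.} Let $w$ be the width of $P$ and let $A$ be a maximum antichain, so $|A|=w$; by maximality every element of $P$ is comparable to some element of $A$, so the ground set is the union of the order ideal $\downarrow A$ and the order filter $\uparrow A$, meeting exactly in $A$. Fixing once and for all a linear extension of $P$ on $\downarrow A\setminus A$ and one on $\uparrow A\setminus A$, and inserting between them an arbitrary linear order $\sigma$ of $A$, produces a linear extension $\pi(\sigma)$ of $P$ (no ``backward'' comparison can arise, since it would force a $<_P$-relation between two elements of $A$), and every comparison of $\pi(\sigma)$ that is not internal to $A$ has an outcome independent of $\sigma$. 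Hence $\sort_P$ restricted to $\{\pi(\sigma):\sigma\}$ is exactly standard sorting of the $w$ elements of $A$. Secondly, let $A_1,\dots,A_h$ be the Mirsky levels of $P$, with $a_j=|A_j|$; this partition is monotone ($x<_P y$ forces a strictly larger level), so for any linear orders $\sigma_1,\dots,\sigma_h$ of the levels there is a linear extension of $P$ realising every $\sigma_j$ and placing all of $A_i$ before all of $A_j$ whenever $i<j$. A comparison between two levels then has a fixed outcome and a comparison inside $A_j$ is answered by $\sigma_j$, so $\sort_P$ contains the direct sum of the standard sorting problems on $A_1,\dots,A_h$.

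\textbf{Applying HNS and combining.} By HNS, $Q(\text{sort }m)=\Omega(\log(m!))$; moreover $\mathrm{Adv}(\text{sort }m)=\Omega(\log(m!))$ (from the adversary-method proof of HNS, or from the fact that the adversary bound characterises quantum query complexity up to a constant). The adversary bound is additive under direct sums of problems on disjoint inputs: if $\Gamma_j$ is an optimal adversary matrix for the $j$-th problem then $\Gamma=\sum_j I\otimes\cdots\otimes\Gamma_j\otimes\cdots\otimes I$ is a valid adversary matrix for the direct sum, with spectral norm $\sum_j\|\Gamma_j\|$ and each query norm unchanged. Combining this with the two embeddings,
\[
Q(\sort_P)\ \ge\ c\,\max\!\Big(\log(w!),\ \textstyle\sum_{j=1}^{h}\log(a_j!)\Big)\ \ge\ \tfrac{c}{2}\Big(\log(w!)+\textstyle\sum_j\log(a_j!)\Big)
\]
for a universal constant $c>0$. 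So the theorem reduces to the purely combinatorial inequality
\[
\log|\Delta(P)|\ \le\ C\Big(\log(w!)+\textstyle\sum_j\log(a_j!)\Big)+C'n
\]
for universal $C,C'>0$, which then yields $Q(\sort_P)\ge c_1\log|\Delta(P)|-c_1'n$. To establish it, fix a Dilworth chain partition of $P$ into $w$ chains of sizes $\lambda_1\ge\cdots\ge\lambda_w$, a partition $\lambda$ of $n$ with conjugate $\lambda^*$ (so $\lambda^*_1=w$): deleting all relations between distinct chains only adds linear extensions, so $|\Delta(P)|\le\binom{n}{\lambda_1,\dots,\lambda_w}$, and together with the elementary Stirling estimate $n!\le 2^{\,O(n)}\prod_i\lambda_i!\prod_j\lambda^*_j!$ this gives $\log|\Delta(P)|\le\sum_j\log(\lambda^*_j!)+O(n)$. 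The first column contributes $\log(\lambda^*_1!)=\log(w!)$; what remains is to bound $\sum_{j\ge2}\log(\lambda^*_j!)$ by $\sum_j\log(a_j!)+O(n)$, i.e.\ to compare the height decomposition of $P$ with that of its chain decomposition.

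\textbf{The main obstacle.} The quantum part above is essentially a black box: given HNS and the robustness of the adversary method, the embeddings and their combination are routine. The real difficulty is the combinatorial inequality $\log|\Delta(P)|\le C(\log(w!)+\sum_j\log(a_j!))+C'n$, i.e.\ showing that the two very rigid invariants---the width and the Mirsky level sizes---already capture a constant fraction of the linear-extension entropy, up to $O(n)$. The extremal behaviours to reconcile are disjoint unions of many chains, where the Mirsky term carries the bound, versus wide antichains ``spread'' across many heights, where the width term does; interpolating between these for a general poset is where the Greene--Kleitman duality between chain and antichain partitions enters, and the $O(n)$ slack there---already visible in the Stirling estimate above---appears to be intrinsic, which is why the statement is stated with the additive term $-c'n$ rather than as a clean constant-factor bound.
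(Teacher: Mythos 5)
The paper does not actually prove Theorem~\ref{thm:yao}: it cites Yao's STOC'04 argument and, in the ``Discussion'' paragraph of Section~\ref{sec:qsort}, indicates that Yao's route is to lower-bound the adversary quantity by $\Omega\bigl(n(\ln n-\QH(P))\bigr)$ and then relate $\QH(P)$ to $\H(P)$ and hence to $\itlb(P)$ via Theorem~\ref{thm:ent}; the $-c'n$ slack comes from the gap between $H_n$ and $\ln n$ and between the averaged entropy $\QH$ and the minimized entropy $\H$. Your approach is genuinely different --- you try to locate standard sorting sub-instances structurally (a maximum antichain, and the Mirsky levels as a direct sum) and then argue combinatorially that $\log|\Delta(P)|$ is already captured, up to $O(n)$, by $\log(w!)+\sum_j\log(a_j!)$.

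The quantum half of your argument is sound modulo standard details: the two embeddings you describe are legitimate sub-problems of $\sort_P$, HNS gives the $\Omega(\log(m!))$ adversary bound for sorting $m$ elements, and additivity of the adversary method under direct sums of problems on disjoint input blocks is correct (with the usual care that each $\Gamma_j$ has its norm attained by a positive eigenvalue, which one can arrange). So you have correctly reduced the theorem to the purely combinatorial inequality
\[
\log|\Delta(P)|\ \le\ C\Bigl(\log(w!)+\textstyle\sum_j\log(a_j!)\Bigr)+C'n.
\]
This is where the proposal has a genuine gap, which you acknowledge but leave open, and I do not think it is merely a routine Stirling exercise. Two issues. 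First, the estimate $n!\le 2^{O(n)}\prod_i\lambda_i!\,\prod_j\lambda^*_j!$ is stated as ``elementary'' but is not proved; it is plausible (it holds on all the shapes I checked) but it is a nontrivial inequality about Young diagrams and needs an argument. Second, and more seriously, the remaining step --- bounding $\sum_{j\ge 2}\log(\lambda^*_j!)$ by $\sum_j\log(a_j!)+O(n)$ --- runs against a majorization obstruction. By Greene's theorem there is a partition $\mu$ with $\lambda\preceq\mu$ (any Dilworth decomposition is dominated by the Greene shape) and, sorting the Mirsky level sizes $a$ decreasingly, $a\preceq\mu^*$. Conjugation reverses dominance, so $\lambda^*\succeq\mu^*\succeq a$. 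Since $x\mapsto\log\Gamma(x+1)$ is convex and $\sum_j\lambda^*_j=\sum_j a_j=n$, Karamata's inequality gives $\sum_j\log(\lambda^*_j!)\ \ge\ \sum_j\log(a_j!)$, i.e.\ the inequality points the \emph{wrong} way before you strip off the $j=1$ term. What you actually need is a quantitative bound
\[
\sum_j\log(\lambda^*_j!)-\sum_j\log(a_j!)\ \le\ \log(w!)+O(n),
\]
i.e.\ that the Karamata defect is absorbed by the single term $\log(\lambda^*_1!)=\log(w!)$. That is a delicate statement about how far the Dilworth conjugate can overshoot the Mirsky profile, and it is precisely the content that is missing; without it the chain of inequalities does not close, and I would not take for granted that it holds for all posets. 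Until this step is supplied (or replaced), the proposal does not constitute a proof, whereas Yao's entropy-based route sidesteps any such structural comparison entirely.
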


This lower bound is clearly useless when $\log (|\Delta (P)|) < c'n/c$.
Therefore, some asymptotic quantum speedup could still be achievable
in cases where the information-theoretic lower bound is $o(n)$.
Our main result rules out this possibility and improves on Theorem~\ref{thm:yao} for a wide class of posets.

A poset is said to be {\em series-parallel} when it can be obtained by a series or parallel composition of smaller posets.
The $\sort_P$ problem restricted to series-parallel posets includes multiway merging, insertions of multiple elements, and sorting heap-ordered data as special cases.
There are families of arbitrarily dense series-parallel posets $P$ on $n$ elements with $\log (|\Delta (P)|)=o(n)$.

\begin{theorem}
\label{thm:main}
The quantum query complexity of sorting under partial information, given a series-parallel poset $P$, is at least $c\log (|\Delta (P)|)$ for some universal constant $c>0$.
\end{theorem}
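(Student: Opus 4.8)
The plan is to combine the combinatorial structure of series--parallel posets with the quantum adversary method. Let $T$ be a binary decomposition tree of $P$: its leaves are the $n$ elements, its internal nodes are labelled ``series'' or ``parallel'', and a parallel node $v$ has subtrees with $m_v$ and $n_v-m_v$ leaves. A routine induction on $T$ gives the identity
\[
\log_2|\Delta(P)| \;=\; \sum_{v\ \mathrm{parallel}} \log_2\binom{n_v}{m_v},
\]
since a series node $P_1\ast P_2$ has $|\Delta|=|\Delta(P_1)|\,|\Delta(P_2)|$ and a parallel node $P_1+P_2$ has $|\Delta|=\binom{n_1+n_2}{n_1}|\Delta(P_1)|\,|\Delta(P_2)|$. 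So it suffices to prove a quantum lower bound of the form $\Omega\!\big(\sum_{v}\log\binom{n_v}{m_v}\big)$.

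The tool is the negative--weight adversary bound $\mathrm{Adv}^{\pm}$, for which $Q(f)=\Theta(\mathrm{Adv}^{\pm}(f))$ (Reichardt) with universal constants, which is monotone under making a problem easier (adding oracle queries or restricting the input set can only lower it), and which is \emph{exactly} additive under direct sums, $\mathrm{Adv}^{\pm}(\bigoplus_i g_i)=\sum_i\mathrm{Adv}^{\pm}(g_i)$. I also record two base estimates. From H\o{}yer--Neerbek--Shi and Childs--Lee one gets $\mathrm{Adv}^{\pm}(\mathrm{Sort}_k)=\Omega(k\log k)$ and $\mathrm{Adv}^{\pm}(\mathrm{Search}_k)=\Omega(\log k)$ for the $k$-point ordered search problem; and embedding $\min(p,q)$ disjoint ordered searches into the $(p,q)$-merging problem yields $\mathrm{Adv}^{\pm}(\mathrm{Merge}_{p,q})=\Omega\!\big(\log\binom{p+q}{p}\big)$.

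Now induct on $T$. If the root is a series node, $P=P_1\ast P_2$, every comparison across $P_1,P_2$ has a fixed answer, so $\sort_P$ is literally the direct sum $\sort_{P_1}\oplus\sort_{P_2}$; exact additivity and the induction hypothesis close this case with no loss. If the root is a parallel node, $P=P_1+P_2$, an instance of $\sort_P$ is a triple (a linear extension of $P_1$, one of $P_2$, an interleaving pattern of the two blocks), so the input set factors as $\Delta(P_1)\times\Delta(P_2)\times\binom{[n]}{m}$. Fixing the interleaving gives $\mathrm{Adv}^{\pm}(\sort_P)\ge \mathrm{Adv}^{\pm}(\sort_{P_1})+\mathrm{Adv}^{\pm}(\sort_{P_2})$, and fixing the two linear extensions gives $\mathrm{Adv}^{\pm}(\sort_P)\ge \mathrm{Adv}^{\pm}(\mathrm{Merge}_{m,n-m})$. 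What I actually need, however, is not the maximum of these three quantities but their \emph{sum}: a constant--factor loss per parallel node would be fatal, since posets with $\log|\Delta(P)|=o(n)$ (exactly the ones not handled by Theorem~\ref{thm:yao}) may have nested chains of $\Omega(\sqrt n)$ parallel nodes. So I want
\[
\mathrm{Adv}^{\pm}(\sort_{P_1+P_2})\;\ge\;\mathrm{Adv}^{\pm}(\sort_{P_1})+\mathrm{Adv}^{\pm}(\sort_{P_2})+\mathrm{Adv}^{\pm}(\mathrm{Merge}_{m,n-m}),
\]
and the natural route is to take optimal adversary matrices $\Gamma_1,\Gamma_2,\Gamma_M$ and set $\Gamma=\Gamma_1\otimes I\otimes I+I\otimes\Gamma_2\otimes I+I\otimes I\otimes\Gamma_M$ on the product input set, where $\|\Gamma\|\ge\|\Gamma_1\|+\|\Gamma_2\|+\|\Gamma_M\|$ via a product of top eigenvectors, and feeding $Q(\sort_P)=\Theta(\mathrm{Adv}^{\pm}(\sort_P))$ together with the combinatorial identity then gives $Q(\sort_P)\ge c\log|\Delta(P)|$.

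The main obstacle is the adversary constraint $\|\Gamma\circ D_q\|\le 1$ for each query $q$. For a comparison internal to $P_1$ (or $P_2$) this is clean: $D_q=D_q^{(1)}\otimes J\otimes J$ with $J$ all--ones, and because $D_q^{(1)}$ has zero diagonal the $\Gamma_2$- and $\Gamma_M$-terms vanish, leaving $\|\Gamma\circ D_q\|=\|\Gamma_1\circ D_q^{(1)}\|\le 1$. But a comparison between $a\in P_1$ and $b\in P_2$ has an answer depending jointly on the rank of $a$ in $P_1$, the rank of $b$ in $P_2$, \emph{and} the interleaving, so no term is killed and the triangle inequality gives only $\|\Gamma\circ D_q\|\le 3$; removing this factor is the crux. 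My plan is to enrich the query model of every $\sort_Q$ appearing in the recursion with ``rank--threshold'' queries (``is the rank of $x$ in a linear extension of $Q$ at most $t$?''), which only makes the problem easier and hence only lowers $\mathrm{Adv}^{\pm}$, so a lower bound for the enriched problem still implies Theorem~\ref{thm:main}; in the enriched model a cross--comparison's dependence on the rank of $a$ is, for each fixed interleaving, exactly a rank--threshold query on $P_1$, so the $\Gamma_1$-term is controlled by the enriched normalisation of $\Gamma_1$ (and symmetrically for the other two terms), and one must then argue that with these queries the cross--comparison difference matrix is block--diagonal in a way compatible with a single tensor factor, and that rank--threshold queries on $P_1+P_2$ decompose into rank--threshold queries on $P_1,P_2$ together with (enriched) merge queries, so that the enrichment is preserved as the induction descends $T$. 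Pinning down exactly which enriched query set makes the parallel step lossless is the delicate part; everything else --- the combinatorial identity, the series step, the $\mathrm{Merge}$ estimate, and the final assembly --- is routine once it is in place.
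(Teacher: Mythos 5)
Your plan correctly identifies the target identity $\log|\Delta(P)|=\sum_{v\text{ parallel}}\log\binom{n_v}{m_v}$ and the need for a \emph{lossless} additive lower bound across the decomposition tree, and you are right that a constant loss at each parallel node is fatal. But the proposal stops exactly at the hard step. You write that ``pinning down exactly which enriched query set makes the parallel step lossless is the delicate part'' --- and then do not pin it down. That step is the entire content of the theorem. The naive tensor-sum construction $\Gamma=\Gamma_1\otimes I\otimes I+I\otimes\Gamma_2\otimes I+I\otimes I\otimes\Gamma_M$ loses a factor $3$ on cross-comparisons, as you note; the rank-threshold enrichment is only a hope, and as sketched it does not obviously work: the answer to a cross-comparison between $a\in P_1$ and $b\in P_2$ depends jointly on $\sigma|_{P_1}$, $\sigma|_{P_2}$, \emph{and} the interleaving, all three of which are hidden, so it is not a rank-threshold query ``on $P_1$ for each fixed interleaving'' in any way that cleanly decouples the three tensor factors of the filter matrix $D_q$. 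Moreover even granting such a decoupling, you would still need the direct-sum theorem for $\mathrm{Adv}^{\pm}$ to apply to a setting where a single physical query simultaneously touches all three factors, which is outside the hypotheses of the known direct-sum results. In short, the argument has a hole precisely where the difficulty is.

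The paper avoids this difficulty entirely by never trying to make the parallel composition look like a direct sum at the level of the adversary method. Instead it fixes one explicit adversary matrix $\Gamma$ on all of $\Delta(P)$ at once (with $\Gamma_{\sigma\tau}=1/d$ when $\tau$ is obtained from $\sigma$ by a cyclic shift of length $d$), bounds $\max_{ij}\|\Gamma^{ij}\|\le 2\pi$ once and for all (Lemma~\ref{lem:denominator}), and lower-bounds $\|\Gamma\|$ by the quantity $\qlb(P)=\Exp{\sigma}{\sum_i H_{d_i(\sigma)-1}}$. The key move is then Theorem~\ref{thm:improved}: via Stanley's transfer map, $\qlb(P)=n(H_n-\QH(P))$ where $\QH(P)=\Exp{z\in\C(P)}{-\frac1n\sum_i\ln z_i}$ is an \emph{average} over the chain polytope. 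Because a chain of $P+Q$ lies entirely in $P$ or entirely in $Q$, the chain polytope factorizes exactly, $\C(P+Q)=\C(P)\times\C(Q)$, giving the exact identity $\QH(P+Q)=\frac{n_1}{n}\QH(P)+\frac{n_2}{n}\QH(Q)$ (Lemma~\ref{lem:qhpar}) and hence $\qlb(P+Q)=\qlb(P)+\qlb(Q)+nH_n-n_1H_{n_1}-n_2H_{n_2}$ with equality; the excess term is then shown by Stirling to be $\ge c\ln\binom{n}{n_1}$ (Lemma~\ref{lem:tech}). The cross-comparison problem never arises because the normalization $\|\Gamma^{ij}\|\le2\pi$ is established for the fixed global $\Gamma$, not reassembled from pieces. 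If you want to salvage your route, the lesson is that you need a lower-bound functional that is \emph{exactly} additive/decomposable under parallel composition by algebraic structure, not one you try to force into additivity by tensoring adversary matrices; the chain-polytope average is exactly such a functional.
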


Theorem~\ref{thm:main} is proved by relating a quantum adversary lower bound to the partial order entropy, refining an elegant connection established by Yao.
He conjectured that the information-theoretic lower bound holds for any partial order $P$ (up to a constant factor). 
As a further step in this direction, we show that our result does not crucially relies on $P$ being series-parallel. We generalize Theorem~\ref{thm:main} to a wider class of posets obtained by series and parallel compositions of posets that are in some precise sense far from being series-parallel.

The next section presents several instrumental notions from partial order combinatorics and information theory.
In Section~\ref{sec:qsort}, we formulate a lower bound on the quantum query complexity of $\sort_P$
and discuss the relation between our findings and the developments of Yao.
Finally, Section~\ref{sec:sp} gives the proof of our main result.

\section{Sorting and partial order entropy}

Throughout this paper, we denote a poset by a pair $P=(A, \leq_P)$ composed of a ground set $A$ of $n$ elements and a
partial order $\leq_P$ on $A$, defined as a reflexive, antisymmetric, transitive binary relation. 
For convenience in indexing, we often let $A=[n]$, where $[n]=\{1,2,\ldots ,n\}$.
The set $\Delta (P)$ of linear extensions of $P$ is the set of permutations $\sigma$ of $A$ corresponding to total orders extending $\leq_P$,
hence such that $i\leq_P j\implies \sigma (i)\leq \sigma (j)$. Here $\sigma (i)$ denotes the rank of element $i\in A$ in the total order.

The information-theoretic lower bound for sorting under a partial information $P$ is the logarithm of the number
of linear extensions of $P$. For convenience, we use natural logarithms, and refer to this lower bound as:
$$
\itlb (P) \coloneqq \ln (|\Delta (P)|) . 
$$

\paragraph*{Partial order entropy.}

The notion of entropy of a partial order plays a central role in recent advances on sorting problems~\cite{CFJJM10,CFJJM13,CF13,FR16}. 
We introduce the necessary background.
Consider a poset $P=([n],\leq_P)$.
A {\em chain} in $P$ is a sequence $i_1, i_2,\ldots i_k$ of elements in $[n]$ such that $i_1\leq_P i_2\leq_P\ldots\leq_P i_k$.
The {\em chain polytope} $\C (P)$ of $P$ is the subset of $\mathbb{R}^n$ defined by the points $y$ such that:
  \begin{eqnarray*}
    y_i & \geq & 0 \ \ \ \forall i\in [n]\\
    \sum_{i\in C} y_i & \leq & 1\ \ \ \text{ for every chain } C \text{ of } P .
    \end{eqnarray*}
An example is given in Figure~\ref{fig:chain}.
The {\em entropy} of $P$ is defined as 
\begin{equation}
\label{eqn:entropy}
\H(P) \coloneqq \min_{z \in \C(P)} -\frac 1n \sum_{i\in [n]} \ln z_i.
\end{equation}

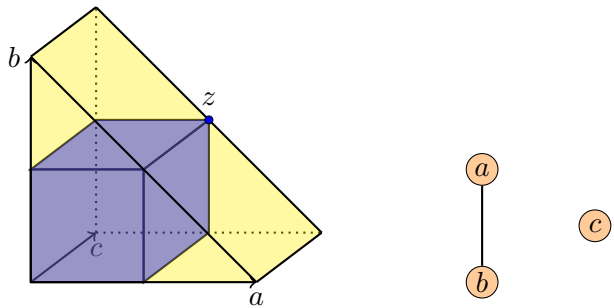
\begin{figure}
\begin{center}
  \begin{tikzpicture}[x={(1cm, 0cm)}, y={(0cm, 1cm)}, z={(0.289310cm, 0.219527cm)},scale=1.5]
\begin{scope}[xshift=0cm,scale=2]
\coordinate (1) at (0, 0, 0);
\coordinate (2) at (0, 0, 1);
\coordinate (3) at (0, 1, 0);
\coordinate (4) at (0, 1, 1);
\coordinate (5) at (1, 0, 0);
\coordinate (6) at (1, 0, 1);
\coordinate (1c) at (0, 0, 0);
\coordinate (2c) at (0, 0, 1);
\coordinate (3c) at (0, .5, 0);
\coordinate (4c) at (.5, 0, 0);
\coordinate (5c) at (.5, .5, 0);
\coordinate (6c) at (.5, 0, 1);
\coordinate (7c) at (0, .5, 1);
\coordinate (8c) at (.5, .5, 1);
\draw[edge,back] (2) -- (4);
\draw[edge,back] (2) -- (6);
\draw[edge] (1c) -- (4c) -- (5c) -- (3c) -- cycle {};
\draw[edge] (8c) -- (7c) -- (3c) -- (5c) -- cycle {};
\draw[edge] (4c) -- (6c) -- (8c) -- (5c) -- cycle {};
\draw[arrow] (1) -- (2) node[below] {$c$};
\fill[facet] (1) -- (3) -- (5) -- cycle {};
\fill[facet] (4) -- (2) -- (6) -- cycle {};
\fill[facet] (3) -- (4) -- (6) -- (5) -- cycle {};
\fill[facet] (3) -- (4) -- (2) -- (1) -- cycle {};
\fill[facet] (1) -- (2) -- (6) -- (5) -- cycle {};
\fill[bluefacet] (1c) -- (4c) -- (5c) -- (3c) -- cycle {};
\fill[bluefacet] (8c) -- (7c) -- (3c) -- (5c) -- cycle {};
\fill[bluefacet] (4c) -- (6c) -- (8c) -- (5c) -- cycle {};
\draw[edge] (3) -- (4);
\draw[edge] (5) -- (6);
\draw[edge] (4) -- (6);
\draw[edge] (3) -- (5);
\draw[arrow] (1) -- (5) node[below] {$a$};
\draw[arrow] (1) -- (3) node[left] {$b$};
\node[point,label=$z$] at (8c) {};
\end{scope}
\begin{scope}[xshift=3cm]
\foreach \pos/\name/\outdeg in {{(1,1)/a/a}, {(1,0)/b/b}, {(2,.5)/c/c}}
   \node[vertex] (\name) at \pos {$\outdeg$};
\draw[edge] (a) -- (b);
\end{scope}
\end{tikzpicture}
\end{center}
\caption{\label{fig:chain}The chain polytope of the poset $(\{a,b,c\},\{(b\leq a)\})$.
The point realizing the minimum in the definition of the entropy in Equation~\eqref{eqn:entropy} is $z=(1/2,1/2,1)$,
and the entropy is $\frac{2}{3}\ln 2\simeq 0.462$.}
\end{figure}

The underlying optimization problem consists of maximizing the volume $\prod_i z_i$ of an axis-aligned box
contained in $\C (P)$, having one of its corner at the origin and $z$ as opposite corner.
The entropy of a poset is a special case of {\em graph entropy}, where the graph is the comparability graph of $P$.
For further insights and applications of the notion of graph entropy,
the reader is referred to the survey of Simonyi~\cite{S93}.

Kahn and Kim~\cite{KK95} showed that the information-theoretic lower bound for $\sort_P$ is closely approximated by the following function of the entropy of $P$: 
\begin{equation}
\label{eqn:lb}
\lb (P) \coloneqq n (\ln n - \H(P)).
\end{equation}
Namely, they proved that $\itlb (P) \leq \lb (P) \leq c \cdot \itlb (P)$ for some constant $c>1$. 
Cardinal, Fiorini, Joret, Jungers, and Munro~\cite{CFJJM13} later showed that one can take $c=2$, which is best possible. 

\begin{theorem}[Kahn and Kim~\cite{KK95}, Cardinal et al.~\cite{CFJJM13}]
\label{thm:ent}
$$\itlb (P) \leq \lb (P) \leq 2\cdot \itlb (P).$$
\end{theorem}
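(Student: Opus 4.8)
The plan is to prove the two inequalities $\itlb(P) \leq \lb(P)$ and $\lb(P) \leq 2\,\itlb(P)$ separately, using the chain polytope $\C(P)$ and the stable-set polytope of the comparability graph as the bridge between the number of linear extensions and the entropy $\H(P)$. The key classical fact I would invoke is Stanley's theorem that the chain polytope $\C(P)$ and the order polytope of $P$ have the same (Ehrhart) volume, equal to $e(P)/n!$ where $e(P) = |\Delta(P)|$; equivalently, $\operatorname{vol}(\C(P)) = |\Delta(P)|/n!$.

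For the lower bound $\itlb(P) \leq \lb(P)$: unfolding the definition, $\lb(P) = n\ln n - n\H(P) = n\ln n + \min_{z\in\C(P)} \sum_i \ln z_i = n\ln n + \ln\bigl(\max_{z\in\C(P)} \prod_i z_i\bigr)$. So it suffices to show $\max_{z\in\C(P)} \prod_i z_i \geq |\Delta(P)| / n!= \operatorname{vol}(\C(P))$, i.e. the largest axis-parallel box with a corner at the origin inside $\C(P)$ has volume at least the volume of $\C(P)$ itself. This is where I would use the structure of $\C(P)$: it is a down-closed polytope in the positive orthant (a ``corner polytope''/antiblocking body), and for such bodies one can average — integrating $\prod_i z_i^{-1}$ or a suitable convexity/AM–GM argument over $\C(P)$, or invoking the known fact that for antiblocking polytopes the maximal inscribed origin-box captures at least a $1/n!$-independent fraction — actually the cleanest route is: by AM–GM applied coordinatewise inside the integral defining $\operatorname{vol}(\C(P))$, one shows the point $z^\star$ attaining the entropy minimum satisfies $\prod_i z_i^\star \geq \operatorname{vol}(\C(P))$. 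This step I expect to require the ``fractional stable set / Shearer-type'' inequality underlying graph entropy.

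For the upper bound $\lb(P) \leq 2\,\itlb(P)$: here I would follow Kahn–Kim's original argument sharpened by Cardinal et al. The idea is to compare the entropy point against a convex combination of the ``greedy'' extreme points of $\C(P)$ arising from a canonical chain decomposition. Concretely, pick the optimal $z^\star\in\C(P)$; the gradient optimality conditions force $z^\star$ to lie on the union of the chain facets $\sum_{i\in C} z_i = 1$ in a way that partitions (most of) $[n]$ into chains $C_1,\dots,C_m$ along which $\sum_{i\in C_j} z_i^\star = 1$. On each such chain of length $\ell_j$, AM–GM gives $\prod_{i\in C_j} z_i^\star \leq \ell_j^{-\ell_j}$, hence $n\H(P) = -\sum_j \sum_{i\in C_j}\ln z_i^\star \geq \sum_j \ell_j \ln \ell_j$, so $\lb(P) \leq n\ln n - \sum_j \ell_j\ln\ell_j = \sum_j \ell_j \ln(n/\ell_j)$ (using $\sum_j \ell_j = n$). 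On the other hand a standard counting bound gives $|\Delta(P)| \geq \prod_j \binom{?}{\ell_j}^{\pm}\cdots$ — more precisely one shows $\itlb(P) \geq \tfrac12 \sum_j \ell_j\ln(n/\ell_j)$ by a merging/interleaving argument (the number of linear extensions is at least the number of ways to interleave the chains, which is a multinomial coefficient $\binom{n}{\ell_1,\dots,\ell_m}$, and $\ln\binom{n}{\ell_1,\dots,\ell_m} \geq \tfrac12\sum_j \ell_j\ln(n/\ell_j)$ when the $\ell_j$ are balanced — the worst case being two equal chains, which is exactly where the constant $2$ and its tightness come from).

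The main obstacle is the second inequality's tightness: getting the constant down to exactly $2$ (rather than some larger absolute constant) requires the precise multinomial estimate $\ln\binom{n}{\ell_1,\dots,\ell_m} \geq \tfrac12\sum_j \ell_j \ln(n/\ell_j)$ and an argument that the entropy-optimal $z^\star$ genuinely induces a chain partition with no ``wasted'' coordinates — i.e. controlling the elements not covered by tight chain constraints and showing they do not spoil the bound. Handling that combinatorial bookkeeping carefully, and verifying the two-equal-chains example shows $2$ is best possible, is the delicate part; the rest is AM–GM and Stanley's volume identity.
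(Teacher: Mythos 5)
The paper does not actually prove this theorem; it is quoted from Kahn--Kim~\cite{KK95} (for $\itlb\le\lb\le c\cdot\itlb$) and from Cardinal et al.~\cite{CFJJM13} (for the sharp constant $2$), so there is no in-paper argument to compare against. Evaluating the proposal on its own terms, both halves contain genuine errors. For $\itlb(P)\le\lb(P)$: you correctly arrive at $\lb(P)=n\ln n+\ln\max_{z\in\C(P)}\prod_i z_i$ (the ``$\min$'' in the middle is a slip for ``$\max$''), but the inequality this reduces to is $\max_{z\in\C(P)}\prod_i z_i\ge|\Delta(P)|/n^n$, not $|\Delta(P)|/n!$. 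Your stated target, $\max_z\prod_i z_i\ge\mathrm{vol}(\C(P))$, is simply false: the box $\prod_i[0,z_i]$ is \emph{inscribed} in the down-closed body $\C(P)$, so its volume $\prod_iz_i$ is at most $\mathrm{vol}(\C(P))$, with equality only when $\C(P)$ is itself a box. Already for a two-element chain, $\mathrm{vol}(\C(P))=1/2$ while $\max z_1z_2=1/4$. The weaker, correct inequality $\max_z\prod_iz_i\ge(n!/n^n)\,\mathrm{vol}(\C(P))$ does hold, and follows from the supporting hyperplane of the concave objective $\sum_i\ln z_i$ at its maximizer $z^\star$: $\C(P)\subseteq\{y\ge0:\sum_iy_i/z_i^\star\le n\}$, a simplex of volume $n^n\prod_iz_i^\star/n!$. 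That is the argument to run here, not a Shearer-type bound pointed in the wrong direction.

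For $\lb(P)\le2\,\itlb(P)$ the multinomial count is reversed, which is fatal. If $C_1,\dots,C_m$ are chains of $P$ partitioning $[n]$ with $|C_j|=\ell_j$, then the poset $Q=C_1+\cdots+C_m$ is \emph{coarser} than $P$, so $\Delta(P)\subseteq\Delta(Q)$ and $|\Delta(P)|\le\binom{n}{\ell_1,\dots,\ell_m}$ --- an upper bound on $|\Delta(P)|$, whereas your argument requires the lower bound $|\Delta(P)|\ge\binom{n}{\ell_1,\dots,\ell_m}$. In addition, the structural premise is wrong: the tight chain constraints at $z^\star$ need not partition $[n]$. For the four-element diamond ($a<b<d$, $a<c<d$, with $b,c$ incomparable) one gets $z^\star=(1/4,1/2,1/2,1/4)$, and the two tight chains $\{a,b,d\}$ and $\{a,c,d\}$ overlap in $\{a,d\}$. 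So the interleaving route cannot yield $\itlb(P)\ge\tfrac12\sum_j\ell_j\ln(n/\ell_j)$; the proofs in~\cite{KK95,CFJJM13} bound $\ln|\Delta(P)|$ from below by a genuinely different (greedy/recursive) argument, and the two-equal-chains example you mention is the tightness witness for the final statement, not a step in its proof.
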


\paragraph*{Chain and order polytopes.}

We let $P=([n],\leq_P)$.
A point $y\in\mathbb{R}^n$ such that $\forall i,j: i\leq_P j\implies y_i\leq y_j$ is said to be {\em consistent} with $P$.
The {\em order polytope} $\mathcal{O}(P)$ of $P$ is the set of points in $[0,1]^n$ that are consistent with $P$.
With a slight abuse of notation, we also let $\mathcal{O} (\sigma)$ be the order polytope of the total order defined by a permutation $\sigma$. We have $\mathcal {O} (P)=\bigcup_{\sigma\in\Delta (P)} \mathcal{O}(\sigma)$. From this decomposition of the order polytope into $|\Delta (P)|$ simplices, we can deduce that $\mathcal{O}(P)$ has volume $|\Delta (P)|/n!$. From the following development, we will conclude that the chain polytope $\mathcal{C}(P)$ has the same volume.

With a point $y\in\mathbb{R}^n$ consistent with $P$, we can associate a vector $d(y)\in\mathbb{R}^n$ as follows.
If $i$ is a minimum in $P$, then $d_i(y) \coloneqq y_i$.
Otherwise, we define $d_i(y)$ as the minimum of $y_i - y_j$ over all $j\not= i$ such that $j\leq_P i$.
For a point $y\in\mathcal{O}(P)$, it can be seen that $d(y)\in\C (P)$.
This mapping between the order polytope and the chain polytope was defined by Stanley~\cite{S86}, and is referred to as {\em Stanley's transfer map}.

\begin{theorem}[Stanley~\cite{S86}]
\label{thm:transfer}
The transfer map $d$ is a bijective, piecewise linear map between $\mathcal{O} (P)$ and $\C (P)$.
\end{theorem}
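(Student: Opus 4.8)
The plan is to exhibit the inverse of $d$ by an explicit formula and then verify that $d$ and this candidate are mutually inverse, continuous, and piecewise linear. Define $e\colon\C(P)\to\mathbb{R}^n$ by
$$e_i(z)\coloneqq\max\bigl\{\,\textstyle\sum_{k\in C}z_k : C\text{ a chain of }P\text{ with largest element }i\,\bigr\},$$
the maximum ranging over the finitely many nonempty chains ending at $i$ (including the singleton $\{i\}$). The claim is that $e$ is the inverse of $d$; this immediately yields bijectivity, and piecewise linearity is then automatic, since each coordinate of $d$ is a minimum of finitely many affine functions of $y$ and each coordinate of $e$ is a maximum of finitely many linear functions of $z$ (in particular both are continuous, so $d$ is a piecewise-linear homeomorphism with piecewise-linear inverse).

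First I would check that the two maps land in the right polytopes. That $d$ sends $\mathcal{O}(P)$ into $\C(P)$: nonnegativity of $d(y)$ is immediate from $y$ being consistent with $P$, and for a chain $i_1<_P\cdots<_P i_k$ the definition gives $d_{i_1}(y)\le y_{i_1}$ and $d_{i_\ell}(y)\le y_{i_\ell}-y_{i_{\ell-1}}$ for $\ell\ge 2$, so the sum telescopes to at most $y_{i_k}\le 1$, which is precisely the chain inequality. Symmetrically, $e$ sends $\C(P)$ into $\mathcal{O}(P)$: each $e_i(z)\ge z_i\ge 0$; the chain constraints of $\C(P)$ bound every chain sum by $1$, so $e_i(z)\le 1$; and if $i\le_P j$ with $i\ne j$, appending $j$ to a chain ending at $i$ produces a chain ending at $j$, whence $e_j(z)\ge e_i(z)+z_j\ge e_i(z)$ and $e(z)$ is consistent with $P$.

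The engine driving both composition identities is the recursion satisfied by $e$: splitting the chains ending at $i$ according to whether they are the singleton $\{i\}$ or have a second-largest element $j<_P i$, one obtains $e_i(z)=z_i+\max\bigl(0,\max_{j<_P i}e_j(z)\bigr)$, which (using $e_j(z)\ge 0$) reads $e_i(z)=z_i$ when $i$ is minimal in $P$ and $e_i(z)=z_i+\max_{j<_P i}e_j(z)$ otherwise. Plugging $y=e(z)$ into the definition of $d$ then gives $d_i(y)=y_i-\max_{j<_P i}y_j=z_i$ for non-minimal $i$ and $d_i(y)=y_i=z_i$ for minimal $i$, so $d\circ e=\mathrm{id}_{\C(P)}$. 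For $e\circ d=\mathrm{id}_{\mathcal{O}(P)}$, set $z=d(y)$; the telescoping bound above shows $\sum_{k\in C}z_k\le y_i$ for every chain $C$ ending at $i$, while equality is attained along the chain obtained by repeatedly descending from $i$ to an element realizing the minimum in the definition of $d_i(y)$, so $e_i(z)=y_i$.

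I expect the only genuine subtlety to be this last ``descending chain'' argument: one must be sure that following argmins produces a strictly decreasing sequence in $P$ — hence without repetitions, so it terminates at a minimal element after finitely many steps — which is exactly where antisymmetry and transitivity of $\le_P$ enter. Everything else is routine bookkeeping with telescoping sums and with maxima of linear forms.
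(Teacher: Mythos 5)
The paper does not prove Theorem~\ref{thm:transfer}; it is cited verbatim from Stanley's 1986 paper, so there is no in-paper argument to compare against. Your proof is a correct and complete reconstruction of Stanley's original argument: you exhibit the explicit inverse
$e_i(z)=\max\bigl\{\sum_{k\in C}z_k : C \text{ a chain with top element } i\bigr\}$,
verify by telescoping that $d$ and $e$ land in $\C(P)$ and $\mathcal{O}(P)$ respectively, establish the recursion $e_i(z)=z_i+\max\bigl(0,\max_{j<_P i}e_j(z)\bigr)$, and use it to check both compositions equal the identity. The ``descending argmin chain'' step for $e\circ d=\mathrm{id}$ is handled carefully — you correctly note that antisymmetry and transitivity guarantee strictly decreasing, hence finite and repetition-free, chains terminating at a minimal element — and the piecewise-linearity observation (each coordinate of $d$ a min, each coordinate of $e$ a max, of finitely many affine forms, with a common refining subdivision) is exactly the right way to conclude. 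I see no gaps. The only cosmetic difference from Stanley's formulation is that you minimize over all $j<_P i$ rather than only over elements covered by $i$; since $y$ is order-preserving on $\mathcal{O}(P)$ the two minima coincide, so this changes nothing.
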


\section{Quantum sorting under partial information}
\label{sec:qsort}

We now consider the problem of sorting under partial information in the quantum decision tree model.
We first formulate the lower bound technique used by H\o{}yer et al.~\cite{HNS02} and Yao~\cite{Y04} in the adversarial
framework developed by Ambainis~\cite{A02} and Barnum, Saks, and Szegedy~\cite{BSS03}.
Then we provide a simple formula for the obtained lower bound, involving a variant of the partial order entropy.

\paragraph*{Quantum query lower bound.}
We consider a real symmetric matrix $\Gamma\in\mathbb{R}^{\Delta^2 (P)}$ indexed by pairs of permutations in $\Delta (P)$. 
Furthermore, for each pair of elements $i,j$, we define a new matrix $\Gamma^{ij}\in\mathbb{R}^{\Delta^2(P)}$ such that 
$\Gamma^{ij}_{\sigma\tau} = 0$ if $(\sigma (i) \leq \sigma (j)) = (\tau (i) \leq \tau (j))$,
hence if the result of the comparison between $i$ and $j$ is the same in both permutations. 
Otherwise, $\Gamma^{ij}_{\sigma\tau} = \Gamma_{\sigma\tau}$.

The adversary lower bound is, up to a constant:
$$
\adv (\sort_P) \coloneqq \max_{\Gamma} \frac{\norm{\Gamma}}{\max_{ij} \norm{\Gamma^{ij}}} , 
$$
where the maximization is over all such real symmetric matrices, and $\norm{\cdot}$ denotes the spectral norm.

\paragraph*{An adversary matrix.}

Given a permutation $\sigma$, we denote by $\sigma^{(k,d)}$ the permutation obtained from $\sigma$ by moving the element 
in position $k+d$ down to position $k$. More precisely, if $\tau = \sigma^{(k,d)}$, then
$$
\sigma^{-1} (i) =
\begin{cases}
\tau^{-1} (k) & \text{if } i=k+d, \\
\tau^{-1} (i+1) & \text{if } k\leq i< k+d, \\
\tau^{-1} (i) & \text{otherwise.}
\end{cases}
$$
Then we let $\Gamma_{\sigma\tau} = 1/d$ when $\tau = \sigma^{(k,d)}$. Otherwise, $\Gamma_{\sigma\tau} = 0$.

With a permutation $\sigma\in \Delta (P)$, we associate a point $(\sigma (1), \sigma (2),\ldots ,\sigma (n))\in\mathbb{R}^n$ that is consistent with $P$.
We use the notation $d_i(\sigma)=d_i((\sigma(1),\sigma(2),\ldots ,\sigma(n)))$ and let $H_q = \sum_{i=1}^q \frac{1}{q}$ denote the $q$-th harmonic number. 
\begin{lemma}
\label{lem:num}
$\norm{\Gamma} \geq E_{\sigma \in \Delta (P)} \left[ \sum_{i\in [n]} H_{d_i(\sigma) - 1} \right]$.
\end{lemma}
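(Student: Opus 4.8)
The plan is to lower bound $\norm{\Gamma}$ by exhibiting a suitable test vector and evaluating the Rayleigh quotient $\langle v, \Gamma v\rangle / \langle v,v\rangle$. The natural candidate is the uniform (all-ones) vector $v = \mathbf{1}$ over $\Delta(P)$, so that $\langle v,v\rangle = |\Delta(P)|$ and $\langle v,\Gamma v\rangle = \sum_{\sigma,\tau} \Gamma_{\sigma\tau}$. Since $\Gamma_{\sigma\tau} = 1/d$ exactly when $\tau = \sigma^{(k,d)}$ for some valid $(k,d)$, the quadratic form becomes $\sum_{\sigma\in\Delta(P)} \sum_{(k,d)} \frac 1 d [\sigma^{(k,d)}\in\Delta(P)]$, where the inner sum is over all moves that keep the permutation a linear extension of $P$. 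Dividing by $|\Delta(P)|$ turns this into an expectation over $\sigma$ drawn uniformly from $\Delta(P)$.

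The key combinatorial step is to recognize which moves $\sigma\mapsto\sigma^{(k,d)}$ land back in $\Delta(P)$ and to regroup the resulting sum by the element being moved rather than by the pair $(k,d)$. Fix $\sigma\in\Delta(P)$ and fix an element $i\in[n]$; say $i$ sits in position $p=\sigma(i)$. Moving $i$ down to position $p-d$ (i.e.\ taking $k=p-d$, so that $k+d=p$) yields a linear extension of $P$ precisely when none of the $d$ elements currently in positions $p-d,\dots,p-1$ lies below $i$ in $P$ — equivalently, when $d$ does not exceed the gap between $p$ and the highest position occupied by an element $j <_P i$. Writing $m_i(\sigma)$ for that maximal feasible displacement, the contribution of element $i$ to the inner sum is $\sum_{d=1}^{m_i(\sigma)} \frac 1 d = H_{m_i(\sigma)}$. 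The definition of Stanley's transfer map gives exactly $d_i(\sigma) = \min_{j<_P i}(\sigma(i)-\sigma(j))$ when $i$ is not minimal, and $d_i(\sigma)=\sigma(i)$ when it is; in either case $m_i(\sigma) = d_i(\sigma)-1$, since the element immediately responsible for $d_i(\sigma)$ is the first one that blocks a further downward move. Summing over $i$ and taking the expectation gives
$$
\norm{\Gamma} \;\geq\; \frac{\langle \mathbf 1, \Gamma\, \mathbf 1\rangle}{\langle \mathbf 1,\mathbf 1\rangle} \;=\; \Exp{\sigma\in\Delta(P)}{\sum_{i\in[n]} H_{d_i(\sigma)-1}},
$$
as claimed. (Entries of $\Gamma$ are nonnegative, so no cancellation can occur and the bound is clean.)

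I expect the main obstacle to be the careful verification of the identity $m_i(\sigma) = d_i(\sigma)-1$, i.e.\ checking that the largest downward displacement of $i$ keeping membership in $\Delta(P)$ is controlled precisely by the transfer-map coordinate, including the boundary case where $i$ is $P$-minimal (so the only constraint is the top of the array) and the case where several constraints are tight simultaneously. One must also confirm that distinct $(k,d)$ with $k+d=\sigma(i)$ correspond to genuinely distinct permutations $\tau$, so that the sum over moves does not double-count entries of $\Gamma$; this is immediate from the explicit description of $\sigma^{(k,d)}$. The spectral-norm step itself is routine — $\norm{\Gamma}\ge \langle v,\Gamma v\rangle/\langle v,v\rangle$ for any nonzero real $v$ and symmetric $\Gamma$ — so the heart of the argument is purely order-combinatorial.
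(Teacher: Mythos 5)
Your proof is correct and takes essentially the same approach as the paper's: both lower bound $\norm{\Gamma}$ via the Rayleigh quotient with the uniform test vector over $\Delta(P)$, then regroup the resulting sum over nonzero entries by the moved element $i$ and the downward displacement $d$, observing that the moves keeping $\sigma^{(k,d)}$ in $\Delta(P)$ are exactly those with $1\leq d\leq d_i(\sigma)-1$, so each $i$ contributes $H_{d_i(\sigma)-1}$. The paper simply writes down the resulting triple sum, whereas you spell out the identification of $d_i(\sigma)-1$ as the maximal feasible displacement; the underlying argument is the same.
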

\begin{proof}
We have $\norm{\Gamma}\geq v^T \Gamma v$ for any unit vector $v$.
Let $v$ be such that $v_{\sigma} = |\Delta (P)|^{-\frac 12}$ for all $\sigma$.
Then 
$$
\norm{\Gamma}\geq \frac{1}{|\Delta (P)|} \sum_{\sigma \in\Delta (P)} \sum_{i\in [n]} \sum_{d\in [d_i(\sigma) - 1]} \frac{1}{d} = E_{\sigma \in \Delta (P)} \left[ \sum_{i\in [n]} H_{d_i(\sigma) - 1} \right].
$$
\end{proof}

The proof of the following upper bound is given in appendix.
\begin{lemma}
\label{lem:denominator}
$\max_{ij} \norm{\Gamma^{ij}} \leq 2\pi$.
\end{lemma}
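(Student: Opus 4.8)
The plan is to bound the spectral norm of each $\Gamma^{ij}$ by exhibiting it as a submatrix-style perturbation of $\Gamma$ whose entries survive only across comparisons that get flipped, and then to control the operator norm via a weighted combinatorial argument. First I would fix $i,j$ and analyze which pairs $(\sigma,\tau)$ with $\tau=\sigma^{(k,d)}$ have $\Gamma^{ij}_{\sigma\tau}\neq 0$: the move that takes $\sigma$ to $\sigma^{(k,d)}$ slides one element past $d$ others, so it changes the relative order of $i$ and $j$ precisely when one of $i,j$ is the moved element (in position $k+d$ of $\sigma$) and the other occupies one of the positions $k,\dots,k+d-1$. Thus for a given starting permutation $\sigma$, the nonzero entries of row $\sigma$ in $\Gamma^{ij}$ come from a restricted family of moves, and each such entry still has value $1/d$.

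Next I would pass to a bound of the form $\norm{\Gamma^{ij}} \le \max_\sigma \sum_\tau |\Gamma^{ij}_{\sigma\tau}|$ only as a crude fallback; the real argument should instead split $\Gamma^{ij}$ according to which of $i$ or $j$ is the moving element, writing $\Gamma^{ij} = \Gamma^{ij}_{\to} + \Gamma^{ij}_{\leftarrow}$, and bound each piece. For the piece where, say, $i$ is moved down from position $k+d$ to $k$ and $j$ sits strictly between, the relevant structure is essentially that of a single ``column'' of moves: from a fixed $\sigma$ the contributions are $1/d$ for the unique move of length $d$ that carries $i$ past $j$ (when $j$ is the element at distance $d$ below $i$'s position), so in fact each row has at most one nonzero entry of each magnitude $1/d$. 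Organizing the permutations by the gap $g$ between the positions of $i$ and $j$, the matrix becomes a direct sum over ``orbits'', on each of which it acts like the operator with entries $1/d$ indexed by how far one slides — and here I would invoke the classical fact that the Hilbert-type matrix with entries $1/(d)$ along such a shift structure has operator norm at most $\pi$ (this is the same estimate, essentially $\sum_d 1/d^2$ controlled sharply, that underlies the $\pi$ in H\o{}yer--Neerbek--Shi and Yao). Adding the two symmetric pieces gives the factor $2$, for a total bound of $2\pi$.

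The main obstacle is the middle step: making precise that $\Gamma^{ij}$ decomposes (up to permutation of rows/columns) into blocks each isometric to a fixed Hilbert-matrix-like operator, so that the $\pi$ bound applies uniformly. One must check that the block structure is genuine — i.e.\ that distinct values of the relevant invariant (the position gap, together with the identity of the moving element) index disjoint sets of $(\sigma,\tau)$ pairs, with no cross terms — and that within a block the entries really are indexed by a single shift parameter $d$ with weight $1/d$, not something more tangled. Once that reduction is in hand, the spectral bound $\pi$ for the shift operator with weights $1/d$ is standard; I would state it as a small self-contained claim (diagonalize by Fourier characters on $\mathbb{Z}$, so the norm equals $\sup_\theta \left| \sum_{d\ge 1} e^{id\theta}/d \right| = \sup_\theta |-\ln(1-e^{i\theta})|$, whose imaginary part is bounded by $\pi/2$ and — after symmetrizing the two directions — yields $\pi$). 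Assembling the two directional pieces then closes the bound at $2\pi$.
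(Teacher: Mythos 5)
Your overall architecture is right, and it matches the shape of the paper's proof: you correctly identify exactly when $\Gamma^{ij}_{\sigma\tau}\neq 0$ (one of $i,j$ is the element moved by the jump $\sigma\mapsto\sigma^{(k,d)}$ and the other is among the $d$ positions it slides past), and you correctly propose to write $\Gamma^{ij}$ as a sum of two pieces indexed by which endpoint moves and to bound each piece by a classical norm-$\pi$ operator, giving $2\pi$. The paper does precisely this, writing $\Gamma^{jj'}$ as $M+M^{T}$ and reducing $\norm{M}$ to the norm of a truncated Hilbert matrix.

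The gap is in the reduction itself and in the ``small self-contained claim'' you defer to. First, the block structure you describe is misidentified. Organize permutations by the signed gap: if $\sigma$ has $\sigma(j)-\sigma(j')=l>0$ and $\tau=\sigma^{(k,d)}$ is a flipping move, one computes $\tau(j')-\tau(j)=m>0$ with $d=l+m-1$. So the reduced operator is indexed by $(l,m)\in\mathbb{N}^{2}$ with entry $1/(l+m-1)$ --- a \emph{Hankel} (Hilbert) matrix, not a Toeplitz shift in a single parameter $d$. Your proposed Fourier diagonalization applies to Toeplitz operators, not to this Hankel operator, so the claimed reduction does not go through as stated. Second, even taken on its own terms, the Fourier bound you write down is false: $\sup_{\theta}\left|-\ln\bigl(1-e^{i\theta}\bigr)\right|=\infty$, because the real part $-\ln\left|1-e^{i\theta}\right|$ diverges as $\theta\to 0$. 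Only the imaginary part is bounded by $\pi/2$, which would bound the skew-symmetric (sign-flipped) Toeplitz operator $\operatorname{sgn}(k-l)/|k-l|$, and that is a different object from $M+M^{T}$. Third, and perhaps most importantly, you do not spell out the step that actually makes the block reduction work: it is \emph{not} a direct-sum decomposition (the blocks have overlapping supports across permutations), but a Cauchy--Schwarz argument. The paper defines $\alpha_{l}$ and $\beta_{m}$ as the $\ell_{2}$ masses of the test vector on the gap-$l$ and gap-$m$ slices, and then bounds $\left|\bra{v}\Gamma^{jj'}\ket{v}\right|\le 2\sum_{l,m}B_{ml}\alpha_{l}\beta_{m}\le 2\norm{B}\le 2\pi$, where $B_{ml}=\delta[l+m\le n]/(l+m-1)$ and $\norm{B}\le\pi$ is Hilbert's inequality. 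Without that Cauchy--Schwarz step, the passage from the full operator (indexed by permutation pairs) to the norm-$\pi$ matrix (indexed by gap pairs) is not justified, and the Fourier identity you invoke would not close the bound even if the structure were Toeplitz.
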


The following adversary lower bound for $\sort_P$ follows from Lemmas~\ref{lem:num} and~\ref{lem:denominator}.
\begin{equation}
\label{eqn:qlb}
\qlb (P) \coloneqq E_{\sigma \in \Delta (P)} \left[ \sum_{i\in [n]} H_{d_i(\sigma) - 1} \right].
\end{equation}

\begin{lemma}
\label{lem:qlb}
$\adv (\sort_P)\geq c\cdot \qlb(P),$
for some universal constant $c>0$. 
\end{lemma}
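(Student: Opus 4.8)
The plan is to derive the bound directly from Lemmas~\ref{lem:num} and~\ref{lem:denominator} by feeding the explicit matrix $\Gamma$ constructed above into the optimization that defines $\adv(\sort_P)$. Since $\Gamma$ is, by construction, a real symmetric matrix indexed by $\Delta^2(P)$, it is a feasible choice in that maximization, so
\[
\adv(\sort_P) \;=\; \max_{\Gamma'}\frac{\norm{\Gamma'}}{\max_{ij}\norm{\Gamma'^{ij}}} \;\geq\; \frac{\norm{\Gamma}}{\max_{ij}\norm{\Gamma^{ij}}} .
\]
It then remains to control the two factors separately: Lemma~\ref{lem:num} lower bounds the numerator by $E_{\sigma\in\Delta(P)}\big[\sum_{i\in[n]}H_{d_i(\sigma)-1}\big]$, which is exactly $\qlb(P)$ by definition~\eqref{eqn:qlb}, and Lemma~\ref{lem:denominator} upper bounds the denominator by $2\pi$. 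Combining the two gives $\adv(\sort_P)\geq \qlb(P)/(2\pi)$, proving the claim with the explicit constant $c=1/(2\pi)$.

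The only thing to check besides invoking the two lemmas is a degenerate corner case: if $P$ is a total order then $\Delta(P)$ is a single permutation, every $d_i(\sigma)$ equals $1$, hence $\qlb(P)=0$ and $\Gamma$ is the zero matrix, and the inequality holds trivially. Whenever $\qlb(P)>0$, Lemma~\ref{lem:num} forces $\Gamma\neq 0$, and any nonzero entry $\Gamma_{\sigma\tau}$ comes from $\tau=\sigma^{(k,d)}$, which changes at least one pairwise comparison, so some $\Gamma^{ij}$ is nonzero and the ratio above is genuinely well defined. I do not expect any real obstacle here: the substance of the argument is packed into Lemmas~\ref{lem:num} and~\ref{lem:denominator}, the first identifying the quadratic form of $\Gamma$ against the uniform vector with the harmonic-number sum (using that $d_i(\sigma)-1$ counts how far element $i$ may be shifted downward while remaining a linear extension of $P$), and the second being the spectral-norm estimate deferred to the appendix.
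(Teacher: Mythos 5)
Your argument is exactly the paper's: the paper states Lemma~\ref{lem:qlb} as an immediate consequence of Lemmas~\ref{lem:num} and~\ref{lem:denominator}, with $\Gamma$ as the feasible adversary matrix and $c=1/(2\pi)$ implicit. Your additional remark on the degenerate case where $\Gamma=0$ is a harmless extra check that does not change the substance.
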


\paragraph*{From $\qlb$ to the entropy.}

We first rewrite the harmonic number involved in the above expression of $\qlb$.

\begin{lemma}
\label{lem:exp-ln}
For every poset $P$ on $n$ elements, $\sigma\in\Delta(P)$ and $1\leq i\leq n$, we have
\begin{align*}
 H_{d_i(\sigma)-1}=H_n+\Exp{y\in \mathcal{O}(\sigma)}{\ln d_i(y)}
\end{align*}
\end{lemma}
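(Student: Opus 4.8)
The plan is to reduce the claim to a one-dimensional integral by understanding how the transfer map $d$ (of $P$) acts on the simplex $\mathcal{O}(\sigma)$. Write $r=\sigma(i)$ and let $r'$ be the largest rank $\sigma(j)$ over all $j\neq i$ with $j\leq_P i$, with the convention $r'=0$ when $i$ is minimal in $P$. Since $\sigma$ extends $P$, any such $j$ satisfies $\sigma(j)<r$, so $1\leq r-r'\leq r\leq n$. For a point $y\in\mathcal{O}(\sigma)$, order its coordinates as $0=u_0\leq u_1\leq\cdots\leq u_n\leq 1$ with $u_k=y_{\sigma^{-1}(k)}$. Directly from the definition of $d$ one gets $d_i(y)=u_r-u_{r'}$: the minimum of $y_i-y_j=u_r-u_{\sigma(j)}$ over the $P$-predecessors $j$ of $i$ is attained by maximizing $\sigma(j)$, which (as $\sigma$ is a permutation) is the single rank $r'$, and the case $r'=0$ recovers $d_i(y)=y_i=u_r$ for $i$ minimal. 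Taking $y=(\sigma(1),\dots,\sigma(n))$, i.e.\ $u_k=k$, shows $d_i(\sigma)=r-r'$; set $m\coloneqq r-r'=d_i(\sigma)$.

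Next I would invoke the classical description of the uniform distribution on the order simplex $\mathcal{O}(\sigma)$: a uniform $y\in\mathcal{O}(\sigma)$ has coordinates distributed as the order statistics of $n$ i.i.d.\ uniform variables on $[0,1]$, so the $n+1$ spacings $s_1=u_1,\ s_2=u_2-u_1,\ \dots,\ s_n=u_n-u_{n-1},\ s_{n+1}=1-u_n$ are exchangeable and uniformly distributed on the standard simplex $\{s\geq 0:\sum_k s_k=1\}$. Consequently $d_i(y)=u_r-u_{r'}=s_{r'+1}+\cdots+s_r$ is a sum of $m$ of these $n+1$ exchangeable spacings, and such a partial sum follows a $\mathrm{Beta}(m,\,n+1-m)$ distribution, with density proportional to $x^{m-1}(1-x)^{n-m}$ on $[0,1]$.

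It then remains to evaluate
\[
\Exp{y\in\mathcal{O}(\sigma)}{\ln d_i(y)}=\frac{1}{B(m,\,n+1-m)}\int_0^1 (\ln x)\,x^{m-1}(1-x)^{n-m}\,dx .
\]
Differentiating the Beta integral $B(a,\,n+1-m)=\int_0^1 x^{a-1}(1-x)^{n-m}\,dx$ with respect to $a$ at $a=m$, and using $\frac{\partial}{\partial a}\ln B(a,b)=\psi(a)-\psi(a+b)$, gives $\Exp{y\in\mathcal{O}(\sigma)}{\ln d_i(y)}=\psi(m)-\psi(n+1)$. The identity $\psi(k)=-\gamma+H_{k-1}$ for positive integers $k$ (with $H_0=0$) yields $\psi(m)-\psi(n+1)=H_{m-1}-H_n$, and hence $H_n+\Exp{y\in\mathcal{O}(\sigma)}{\ln d_i(y)}=H_{m-1}=H_{d_i(\sigma)-1}$, as claimed. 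One can alternatively avoid the digamma function and prove $\int_0^1(\ln x)\,x^{m-1}(1-x)^{n-m}\,dx=B(m,\,n+1-m)\,(H_{m-1}-H_n)$ by induction on $n-m$ via integration by parts.

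The Beta-integral evaluation and the harmonic-number bookkeeping are routine. The one place that needs care is the first step: verifying that the piecewise-linear transfer map restricts, on the full-dimensional simplex $\mathcal{O}(\sigma)$ attached to this fixed $\sigma$, to the \emph{single} linear map $y\mapsto(\dots,\,u_r-u_{r'},\,\dots)$ --- that is, that the predecessor rank $r'$ achieving the minimum in the definition of $d_i$ does not vary over $\mathcal{O}(\sigma)$ --- and that the degenerate case of $P$-minimal elements is absorbed by the $u_0=0$ convention. Once this normalization is pinned down, the remainder is a direct computation.
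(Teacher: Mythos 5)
Your proof is correct and follows the same overall strategy as the paper: both reduce the claim to showing that, for uniform $y\in\mathcal{O}(\sigma)$, the coordinate $d_i(y)$ is a spacing $u_r-u_{r'}$ of the order statistics $u_k=y_{\sigma^{-1}(k)}$ with density $f_{n,m-1}$ (i.e.\ $\mathrm{Beta}(m,n+1-m)$) where $m=d_i(\sigma)=r-r'$, and then evaluate the expectation of $\ln$ under that density. You differ in how you reach those two ingredients. To obtain the Beta density you invoke exchangeability of the $n+1$ spacings of uniform order statistics (a standard Dirichlet fact), so a sum of $m$ consecutive spacings is $\mathrm{Beta}(m,n+1-m)$; the paper's Lemma~\ref{lem:density-order-polytope} instead marginalizes the joint density of two coordinates on the ordered simplex and integrates, leaning on the auxiliary integral $I_{n,k}$ of Lemma~\ref{lem:useful-integrals}. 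To evaluate the log-Beta expectation you use the digamma identity $\psi(m)-\psi(n+1)=H_{m-1}-H_n$, while the paper carries out the integration-by-parts induction explicitly (Equation~\eqref{eq:Hnk}), which you also mention as an alternative. The trade-off is conciseness versus self-containment. You are also more explicit than the paper's terse main-text proof about the normalization step: that on the fixed simplex $\mathcal{O}(\sigma)$ the piecewise-linear transfer map restricts to a single linear map sending coordinate $i$ to $u_{\sigma(i)}-u_{r'}$, with $r'=\max\{\sigma(j): j\neq i,\ j\leq_P i\}$ (or $0$ when $i$ is $P$-minimal), because over $\mathcal{O}(\sigma)$ the minimizing predecessor is determined by $\sigma$ alone; the paper leaves this reduction to Lemma~\ref{lem:density-order-polytope} implicit.
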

\begin{proof}
We prove that $d_i(y)$ has a probability density $f_{y\in \mathcal{O} (\sigma )}[d_i(y)=s]=f_{n,d_i(\sigma)-1}[s]$,
where 
$$
f_{n,k}[s]=n\binom{n-1}{k} s^k(1-s)^{n-k-1}.
$$
We then have 
$$\Exp{y\in \mathcal{O} (\sigma )}{\ln d_i(y)}= H_{d_i(\sigma)-1}-H_n.$$
The details are given in Lemma~\ref{lem:useful-integrals} and Lemma~\ref{lem:density-order-polytope} in appendix.
\end{proof}

We now give an {\it exact} rewriting of the quantum lower bound $\qlb$.
Let
$$
\QH (P) = \Exp{z \in \C(P)}{-\frac 1n \sum_{i\in [n]} \ln z_i}.
$$
The reader is encouraged to compare this expression with the one for the entropy in Equation~\eqref{eqn:entropy}.
The following result provides the quantum analogue of Equation~\eqref{eqn:lb}.

\begin{theorem}
  \label{thm:improved}
  $\qlb(P) = n(H_n - \QH (P))$.
\end{theorem}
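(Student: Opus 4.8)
The plan is to start from the definition of $\qlb(P)$ in Equation~\eqref{eqn:qlb} and apply Lemma~\ref{lem:exp-ln} termwise. That lemma tells us $H_{d_i(\sigma)-1} = H_n + \Exp{y\in\mathcal{O}(\sigma)}{\ln d_i(y)}$, so summing over $i\in[n]$ and then averaging over $\sigma\in\Delta(P)$ gives
$$
\qlb(P) = \Exp{\sigma\in\Delta(P)}{\sum_{i\in[n]} H_{d_i(\sigma)-1}} = nH_n + \Exp{\sigma\in\Delta(P)}{\sum_{i\in[n]}\Exp{y\in\mathcal{O}(\sigma)}{\ln d_i(y)}}.
$$
So it remains to show that the double expectation equals $-n\cdot\QH(P) = \Exp{z\in\C(P)}{\sum_{i\in[n]}\ln z_i}$.

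The key step is to recognize that averaging over $\sigma\in\Delta(P)$ (uniformly) and then over $y\in\mathcal{O}(\sigma)$ (uniformly, i.e. with respect to Lebesgue measure normalized on that simplex) is exactly the same as sampling $y$ uniformly from $\mathcal{O}(P)$. This is because $\mathcal{O}(P) = \bigcup_{\sigma\in\Delta(P)}\mathcal{O}(\sigma)$ is a decomposition into $|\Delta(P)|$ simplices of equal volume $1/n!$ (as recalled just before Theorem~\ref{thm:transfer}), overlapping only on measure-zero boundaries; hence the uniform distribution on $\mathcal{O}(P)$ is the equal-weight mixture of the uniform distributions on the pieces. Therefore
$$
\Exp{\sigma\in\Delta(P)}{\sum_{i\in[n]}\Exp{y\in\mathcal{O}(\sigma)}{\ln d_i(y)}} = \Exp{y\in\mathcal{O}(P)}{\sum_{i\in[n]}\ln d_i(y)}.
$$

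Now I would push this expectation through Stanley's transfer map. By Theorem~\ref{thm:transfer}, $d$ is a piecewise-linear bijection from $\mathcal{O}(P)$ to $\C(P)$; on the interior of each simplex $\mathcal{O}(\sigma)$ it is a linear bijection onto a corresponding region of $\C(P)$, and since it maps the unit-volume-normalized pieces of $\mathcal{O}(P)$ to a decomposition of $\C(P)$, it is volume-preserving (each linear piece has determinant $\pm 1$, being a unipotent integer matrix). Consequently, if $y$ is uniform on $\mathcal{O}(P)$ then $z := d(y)$ is uniform on $\C(P)$, and the change of variables gives
$$
\Exp{y\in\mathcal{O}(P)}{\sum_{i\in[n]}\ln d_i(y)} = \Exp{z\in\C(P)}{\sum_{i\in[n]}\ln z_i} = -n\cdot\QH(P).
$$
Combining the three displays yields $\qlb(P) = nH_n - n\cdot\QH(P) = n(H_n - \QH(P))$, as claimed.

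The main obstacle is the measure-theoretic bookkeeping around the transfer map: one must check that $d$ is genuinely measure-preserving, not merely a bijection. This follows because $\mathcal{O}(P)$ and $\C(P)$ have the same volume $|\Delta(P)|/n!$ and $d$ restricts to a linear map on each full-dimensional piece, so each Jacobian has absolute value $1$; alternatively one can cite that the pieces of $\C(P)$ in Stanley's decomposition are the images of the $\mathcal{O}(\sigma)$ and are themselves unimodular simplices. The other routine point to verify is that the exchange of the finite sum over $i$, the finite average over $\sigma$, and the integral over $y$ is legitimate — this is immediate since everything is a finite sum of absolutely convergent integrals ($\ln d_i(y)$ is integrable on the simplex, as already used in Lemma~\ref{lem:exp-ln}).
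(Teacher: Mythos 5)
Your proof is correct and follows essentially the same route as the paper's: apply Lemma~\ref{lem:exp-ln} termwise, fold the two-stage average over $\sigma\in\Delta(P)$ and then $y\in\mathcal{O}(\sigma)$ into a single uniform average over $\mathcal{O}(P)$, and then push through Stanley's transfer map to land on $\C(P)$. The only difference is that you spell out the measure-theoretic justifications (equal-volume simplices, unit Jacobian of the piecewise-linear transfer map) that the paper leaves implicit.
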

\begin{proof}
\begin{eqnarray*}
\qlb (P) & = & \Exp{\sigma\in\Delta(P)}{\sum_{i\in [n]} H_{d_i(\sigma)-1}} \\
& = & \Exp{\sigma\in\Delta(P)}{\sum_{i\in [n]} \left( H_n+\Exp{y\in \mathcal{O} (\sigma )}{\ln d_i(y)} \right)} \text{\ (from\ Lemma~\ref{lem:exp-ln})}\\
& = & n \left( H_n+\Exp{y\in\mathcal{O}(P)}{\frac 1n \sum_{i\in [n]} \ln d_i(y) }\right) \\
& = & n \left( H_n +\Exp{z\in\C (P)}{ \frac 1n \sum_{i\in [n]} \ln z_i }\right) \text{\ (from\ Theorem~\ref{thm:transfer})}\\
& = & n (H_n - \QH (P)).
\end{eqnarray*}
\end{proof}

\paragraph*{Discussion.}
The quantity $\QH (P)$ is an averaged version of the entropy $\H(P)$ which can be shown to lie in the interval $[1,H_n]$.
Yao~\cite{Y04} already established a relation between the adversary lower bound and $\QH (P)$.
Precisely, he proved that $$\qlb (P)\geq \Omega (n(\ln n - \QH (P))).$$
This inequality is not sufficient to get rid of the linear term $-c'n$ in Theorem~\ref{thm:yao}.   
Theorem~\ref{thm:improved} strengthens the relation to an equality when $\ln n$ is replaced by $H_n$ in the right-hand side.
This exact reformulation allows us to analyze the bound on a wide class of posets.

\section{A proof of Yao's conjecture for series-parallel posets}
\label{sec:sp}

Given  two posets $P$ and $Q$ with disjoint element sets, the {\em series composition}, or {\em ordinal sum} $P\oplus Q$ is the poset 
on the union of the element sets of $P$ and $Q$ such that $x \leq_{P\oplus Q} y$ if and only if one of the following holds:
\begin{enumerate}
\item $x \leq_P y$,
\item $x \leq_Q y$,
\item $x$ belongs to $P$ and $y$ to $Q$.
\end{enumerate} 
Given  two posets $P$ and $Q$ with disjoint element sets, the 
{\em parallel composition}, or {\em direct sum} $P + Q$ is the poset 
on the union of the element sets of $P$ and $Q$ such that $x \leq_{P+Q} y$ if and only if 
either $x \leq_P y$ or $x \leq_Q y$.
A poset is a {\em series-parallel} poset if and only if it is a singleton, or it can be obtained
by a series or parallel composition of two series-parallel posets.
The definition is illustrated in Figure~\ref{fig:sp}.

\begin{figure}
\begin{center}
\begin{tikzpicture}[scale=.7,auto,swap]
\begin{scope}[xshift=0cm]
\foreach \left/\right in {{(1,7)/(7,9)}, {(1,4)/(7,6)}, {(1,1)/(7,3)}} 
   \filldraw[fill=red!40, fill opacity = .3, rounded corners] \left rectangle \right;
\foreach \left/\right in {{(2.5,7.5)/(3.5,8.5)}, {(4.5,7.1)/(5.5,8.9)}, {(1.5,4.5)/(2.5,5.5)}, {(3.5,4.5)/(4.5,5.5)}, {(5.5,4.5)/(6.5,5.5)}}
   \filldraw[fill=blue!70, fill opacity = .3, rounded corners] \left rectangle \right;
\foreach \pos/\name/\outdeg in {{(3,8)/a}, {(5,8.5)/b}, {(5,7.5)/c}, {(2,5)/d}, {(4,5)/e}, {(6,5)/f}, {(4,2)/g}}
   \node[vertex] (\name) at \pos {};
\foreach \source/\dest in {a/d,a/e,a/f,b/c,c/d,c/e,c/f,d/g,e/g,f/g}
   \path[edge] (\source) -- (\dest);
\end{scope}
\end{tikzpicture}
\end{center}
\caption{\label{fig:sp}A series-parallel poset of the form $\circ \oplus (\circ +\circ +\circ) \oplus (\circ + (\circ\oplus\circ))$, where $\circ$ is a singleton.}
\end{figure}
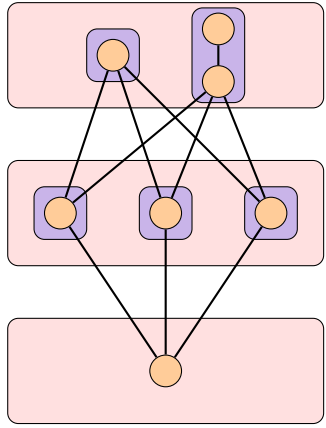

We consider the behavior of $\qlb$ and $\QH$ under the two composition operations.

\begin{lemma}[Quantum lower bound \& series compositions]
\label{lem:lbser}
Let $P$ and $Q$ be two disjoint posets. 
Then 
$$
\qlb (P\oplus Q) = \qlb (P) + \qlb(Q).
$$
\end{lemma}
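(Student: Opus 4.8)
The plan is to exploit the explicit formula $\qlb(P)=n(H_n-\QH(P))$ from Theorem~\ref{thm:improved} together with Stanley's transfer map, reducing everything to a statement about how the chain polytope and the coordinate-product functional behave under series composition. Write $n=|P|$, $m=|Q|$, so that $P\oplus Q$ has $n+m$ elements. First I would recall that in the ordinal sum $P\oplus Q$, every chain is either a chain of $P$, a chain of $Q$, or the concatenation of a chain of $P$ (ending at a maximal element) with a chain of $Q$ (starting at a minimal element); this is the combinatorial heart of the matter. Because of the concatenation chains, the chain polytope $\C(P\oplus Q)$ is \emph{not} a product of $\C(P)$ and $\C(Q)$, so I cannot argue at the level of chain polytopes directly. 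Instead I would pass through the order polytope, where series composition behaves cleanly: a point $y$ consistent with $P\oplus Q$ is, after rescaling, essentially a point of $\mathcal{O}(P)$ on the bottom $n$ coordinates and a point of $\mathcal{O}(Q)$ on the top $m$ coordinates, with all $P$-values below all $Q$-values.

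Concretely, I would go back to the combinatorial definition of $\qlb$ in Equation~\eqref{eqn:qlb}. A linear extension $\rho\in\Delta(P\oplus Q)$ is precisely a pair $(\sigma,\tau)$ with $\sigma\in\Delta(P)$ and $\tau\in\Delta(Q)$: the first $n$ ranks go to $P$ according to $\sigma$ and the last $m$ ranks go to $Q$ according to $\tau$. So $|\Delta(P\oplus Q)|=|\Delta(P)|\cdot|\Delta(Q)|$ and the expectation over $\Delta(P\oplus Q)$ factors as the product of the two uniform measures. The key point is then to compute $d_i(\rho)$ for $\rho=(\sigma,\tau)$. For $i\in Q$, the set of elements below $i$ in $P\oplus Q$ includes all of $P$, but the $Q$-elements below $i$ have smaller rank than every $P$-element, so the minimizing $j$ in the definition of $d_i$ is the same as the one computed inside $Q$; hence $d_i(\rho)=d_i(\tau)$, and in particular for an element $i$ that is minimal in $Q$ we get $d_i(\rho)=d_i(\tau)=\tau(i)$, which is its rank inside $Q$ — the offset $n$ cancels because $d$ measures a \emph{difference}. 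For $i\in P$, the elements below $i$ lie entirely in $P$ and their ranks are unchanged, so $d_i(\rho)=d_i(\sigma)$ directly. Therefore
\begin{align*}
\sum_{i\in[n+m]} H_{d_i(\rho)-1} \;=\; \sum_{i\in P} H_{d_i(\sigma)-1} \;+\; \sum_{i\in Q} H_{d_i(\tau)-1},
\end{align*}
and taking expectation over the product measure gives $\qlb(P\oplus Q)=\qlb(P)+\qlb(Q)$.

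The main obstacle — the one step that needs genuine care rather than bookkeeping — is verifying that $d_i(\rho)$ really does reduce to the value computed within the appropriate factor, i.e. that the minimizer $j\le_{P\oplus Q} i$ achieving $\min_j\bigl(\rho^{-1}\text{-ranks difference}\bigr)$ never "jumps" across the $P/Q$ interface in a way that changes the value. For $i\in P$ this is immediate since $j\le_{P\oplus Q}i$ forces $j\in P$. For $i\in Q$ one must observe that although many new elements (all of $P$, plus possibly more of $Q$) are now below $i$, every element of $P$ receives a rank at most $n$ while $i$ itself receives a rank at least $n+1$ and its immediate $Q$-predecessor (if any) receives a rank exceeding that of every $P$-element; so the minimum of rank differences is attained at the same $j\in Q$ as before, unless $i$ is minimal in $Q$, in which case $d_i(\rho)=\rho^{-1}\text{-rank of }i$ minus nothing, but this equals $n+\tau(i)-n=\tau(i)=d_i(\tau)$ — wait, more carefully: if $i$ is minimal in $Q$ it is \emph{not} minimal in $P\oplus Q$ (every element of $P$ is below it), so $d_i(\rho)=\min_{j\le i, j\neq i}(\text{rank}_\rho(i)-\text{rank}_\rho(j))$, and the closest such $j$ in rank is the top element of $P$ under $\sigma$, which has rank $n$, while $i$ has rank $n+\tau(i)$... so $d_i(\rho)=\tau(i)$, which is exactly $d_i(\tau)$ when $i$ is minimal in $Q$. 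Either way the identity holds, and once this case analysis is pinned down the lemma follows. (Alternatively one can run the entire argument through Stanley's transfer map and Theorem~\ref{thm:improved}, checking instead that $d(y)$ restricted to $P$- and $Q$-coordinates behaves compatibly with the product structure on order polytopes; the combinatorial route above seems cleaner and avoids a subtlety about the non-product structure of $\C(P\oplus Q)$.)
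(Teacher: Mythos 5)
Your proof is correct and follows essentially the same route as the paper: work directly with the combinatorial formula~\eqref{eqn:qlb}, use that a linear extension of $P\oplus Q$ is a linear extension of $P$ followed by one of $Q$, and check that $d_i$ computed in $P\oplus Q$ agrees with $d_i$ computed in the relevant factor. You actually spell out the only nontrivial case ($i\in Q$, and in particular $i$ minimal in $Q$) in more detail than the paper, which states the $i\in P$ identity explicitly and leaves the $Q$ side implicit.
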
 
\begin{proof}
Let $P$ be defined on $\{1, \dots, n_1\}$, $Q$ on $\{n_1 + 1, \dots, n_1 + n_2\}$, and set $n:= n_1+n_2$.  
We use the original formulation of the quantum lower bound given in~\eqref{eqn:qlb}, and the fact that any linear extension
of $P\oplus Q$ consists of a linear extension of $P$ followed by a linear extension of $Q$. Furthermore, the value of $d_i(\sigma)$
for an element $i\in P$ and a linear extension $\sigma\in\Delta(P\oplus Q)$ is the same as the one for the corresponding $\sigma$ in $\Delta(P)$.
Therefore,
\begin{eqnarray*}
\qlb (P\oplus Q) & = & \Exp{\sigma\in\Delta(P\oplus Q)}{\sum_{i\in [n]} H_{d_i(\sigma)-1}} \\
& = & \Exp{\sigma\in\Delta(P\oplus Q)}{\sum_{i\in [n_1]} H_{d_i(\sigma)-1} + \sum_{i=n_1+1}^n H_{d_i(\sigma)-1}} \\
& = & \Exp{\sigma\in\Delta(P)}{\sum_{i\in [n_1]} H_{d_i(\sigma)-1}} + \Exp{\sigma\in\Delta(Q)}{\sum_{i=n_1+1}^n H_{d_i(\sigma)-1}} \\
& = & \qlb (P) + \qlb (Q) .
\end{eqnarray*}
\end{proof}

Analyzing parallel compositions using the quantum lower bound as formulated in~\eqref{eqn:qlb} seems difficult. 
However, our reformulation as a function of $\QH$ makes this case particularly easy.

\begin{lemma}[$\QH$ and parallel composition]
  \label{lem:qhpar}
  Let $P$ and $Q$ be two posets with element sets $\{1, \dots, n_1\}$ and $\{n_1 + 1, \dots, n_1 + n_2\}$ respectively, and let $n:= n_1+n_2$.  
  Then
  $$
  \QH (P + Q) = \frac{n_1}n \QH(P) + \frac{n_2}n \QH(Q) .
  $$
\end{lemma}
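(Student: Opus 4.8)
The plan is to exploit the fact that the chain polytope of a parallel composition (direct sum) factors as a product. Concretely, if $P$ is a poset on $\{1,\dots,n_1\}$ and $Q$ a poset on $\{n_1+1,\dots,n_1+n_2\}$, then since a chain of $P+Q$ is entirely contained in either $P$ or in $Q$, the defining inequalities of $\C(P+Q)$ split into those involving only the first $n_1$ coordinates (the chain inequalities of $P$) and those involving only the last $n_2$ coordinates (the chain inequalities of $Q$). Hence $\C(P+Q) = \C(P) \times \C(Q)$ as a subset of $\mathbb{R}^{n_1} \times \mathbb{R}^{n_2} = \mathbb{R}^n$. I would state and justify this factorization first, as it is the geometric heart of the argument.

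Next I would use this product structure to relate the uniform distributions. The uniform distribution on $\C(P+Q)$ is exactly the product of the uniform distribution on $\C(P)$ and the uniform distribution on $\C(Q)$, because the volume of a product is the product of the volumes and the density is constant on each factor. Therefore, writing $z = (z', z'')$ with $z' \in \C(P)$ and $z'' \in \C(Q)$ drawn independently and uniformly, I can compute
$$
\QH(P+Q) = \Exp{z\in\C(P+Q)}{-\frac1n \sum_{i\in[n]} \ln z_i} = -\frac1n\left( \Exp{z'\in\C(P)}{\sum_{i\in[n_1]}\ln z'_i} + \Exp{z''\in\C(Q)}{\sum_{i=n_1+1}^{n}\ln z''_i}\right),
$$
where I used linearity of expectation and the independence of the two blocks of coordinates. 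Recognizing $\Exp{z'\in\C(P)}{-\frac{1}{n_1}\sum_{i\in[n_1]}\ln z'_i} = \QH(P)$ and similarly for $Q$, the right-hand side becomes $\frac{n_1}{n}\QH(P) + \frac{n_2}{n}\QH(Q)$, which is the claim.

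I do not expect any serious obstacle here; the argument is essentially a one-line observation once the polytope factorization $\C(P+Q) = \C(P)\times\C(Q)$ is in place, followed by Fubini/linearity of expectation. The only point requiring a modicum of care is the claim that sampling uniformly from a product polytope is the same as sampling the two factors independently and uniformly — this is standard but worth a sentence, since it is what lets us split the expectation of $\sum_i \ln z_i$ (a sum each of whose terms depends on only one factor) into two independent expectations. It is worth contrasting this with Lemma~\ref{lem:lbser}: there the \emph{series} composition behaved additively for $\qlb$ because linear extensions concatenate, whereas here it is the \emph{parallel} composition that is transparent, precisely because the chain polytope — and not the order polytope — is what factors as a product under direct sums.
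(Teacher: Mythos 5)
Your proof is correct and follows essentially the same route as the paper: the key observation that every chain of $P+Q$ lies entirely in $P$ or in $Q$, hence $\C(P+Q)=\C(P)\times\C(Q)$, followed by linearity of expectation and the product structure of the uniform measure. If anything, you make explicit (via the independence of the two coordinate blocks under the uniform distribution on a product polytope) a step that the paper leaves implicit when it replaces $\Exp{z\in\C(P+Q)}{\cdot}$ by $\Exp{z\in\C(P)}{\cdot}$ for a function of the first block only.
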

\begin{proof}
  A chain in the poset $P+Q$ is always fully contained in either $P$ or $Q$. Therefore,
  $z\in \C(P+Q)$ if and only if $(z_1,\ldots ,z_{n_1})\in \C(P)$ and $(z_{n_1+1},\ldots ,z_n)\in\C(Q)$.
  Hence,
  \begin{eqnarray*}
    \QH (P+Q) & = & \Exp{z \in \C(P+Q)}{-\frac 1n \sum_{i\in [n]} \ln z_i} \\
    & = & \Exp{z \in \C(P+Q)}{-\frac 1n \sum_{i\in [n_1]}  \ln z_i} + \Exp{z \in \C(P+Q)}{-\frac 1n \sum_{i=n_1+1}^n \ln z_i }\\
    & = & \Exp{z \in \C(P)}{-\frac 1n \sum_{i\in [n_1]}  \ln z_i} + \Exp{z \in \C(Q)}{-\frac 1n \sum_{i=n_1+1}^n \ln z_i }\\
    & = & \frac{n_1}n \QH(P) + \frac{n_2}n \QH(Q) .
    \end{eqnarray*}
  \end{proof}

\begin{lemma}[Quantum lower bound \& parallel compositions]
\label{lem:lbpar}
Let $P$ and $Q$ be two posets with element sets $\{1, \dots, n_1\}$ and $\{n_1 + 1, \dots, n_1 + n_2\}$ respectively, and let $n:= n_1+n_2$.  
Then 
$$
\qlb (P + Q) = \qlb (P) + \qlb(Q) + nH_n - n_1 H_{n_1} - n_2H_{n_2}.
$$
\end{lemma}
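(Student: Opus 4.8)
The plan is to combine the previous two lemmas using the exact reformulation from Theorem~\ref{thm:improved}. Recall that $\qlb(R) = |R|\,(H_{|R|} - \QH(R))$ for any poset $R$, where $|R|$ denotes the size of its ground set. Applying this with $R = P+Q$, which has $n$ elements, we get $\qlb(P+Q) = n(H_n - \QH(P+Q))$. Now substitute the value of $\QH(P+Q)$ supplied by Lemma~\ref{lem:qhpar}, namely $\QH(P+Q) = \frac{n_1}{n}\QH(P) + \frac{n_2}{n}\QH(Q)$, to obtain
$$
\qlb(P+Q) = nH_n - n_1\QH(P) - n_2\QH(Q).
$$

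The second step is to eliminate $\QH(P)$ and $\QH(Q)$ in favour of $\qlb(P)$ and $\qlb(Q)$. From Theorem~\ref{thm:improved} applied to $P$ and to $Q$ separately we have $n_1\QH(P) = n_1 H_{n_1} - \qlb(P)$ and $n_2\QH(Q) = n_2 H_{n_2} - \qlb(Q)$. Plugging these into the displayed equation gives
$$
\qlb(P+Q) = nH_n - (n_1 H_{n_1} - \qlb(P)) - (n_2 H_{n_2} - \qlb(Q)) = \qlb(P) + \qlb(Q) + nH_n - n_1 H_{n_1} - n_2 H_{n_2},
$$
which is exactly the claimed identity. That is the entire argument; it is a short algebraic manipulation once Theorem~\ref{thm:improved} and Lemma~\ref{lem:qhpar} are in hand.

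There is essentially no obstacle here — the work was front-loaded into establishing the $\QH$ reformulation (Theorem~\ref{thm:improved}) and the clean behaviour of $\QH$ under parallel composition (Lemma~\ref{lem:qhpar}), precisely because, as the paper remarks, attacking parallel compositions directly through the $d_i(\sigma)$ formulation in~\eqref{eqn:qlb} is awkward. The only point requiring a modicum of care is bookkeeping of which ground-set size goes with which $H$: one must apply Theorem~\ref{thm:improved} at the correct scale ($n$ for $P+Q$, $n_1$ for $P$, $n_2$ for $Q$), and it is worth noting in passing that the correction term $nH_n - n_1 H_{n_1} - n_2 H_{n_2}$ is nonnegative (and of order $n$), which is what will ultimately drive the linear-versus-logarithmic comparison in the proof of Theorem~\ref{thm:main}. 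Combined with the additivity of $\qlb$ under series composition (Lemma~\ref{lem:lbser}), this lemma lets one compute $\qlb(P)$ recursively along the decomposition tree of any series-parallel poset, which is the route to the main theorem.
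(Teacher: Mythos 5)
Your proof is correct and follows exactly the same route as the paper's: apply Theorem~\ref{thm:improved} to $P+Q$, substitute Lemma~\ref{lem:qhpar}, then apply Theorem~\ref{thm:improved} again to $P$ and $Q$ separately to eliminate $\QH(P)$ and $\QH(Q)$. The paper's version is just a four-line chain of equalities leaving the final back-substitution implicit, which you have spelled out explicitly.
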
 
\begin{proof}
  Here we consider the reformulation of $\qlb$ given in Theorem~\ref{thm:improved}.
\begin{eqnarray*}
  \qlb (P+Q) & = & n(H_n - \QH (P+Q)) \\
  & = & n\left(H_n - \left(\frac{n_1}n \QH (P) + \frac{n_2}n \QH(Q)\right)\right) \text{(from\ Lemma~\ref{lem:qhpar})}\\
  & = & nH_n - n_1\QH(P) - n_2\QH(Q) \\
  & = & \qlb (P) + \qlb(Q) + nH_n - n_1 H_{n_1} - n_2H_{n_2}.
\end{eqnarray*}
\end{proof}

Before proving our main result, we need the following technical lemma, obtained from Stirling's formula.

\begin{lemma}
  \label{lem:tech}
  There exists a constant $c>0$ such that for all integers $n_1,n_2 \geq 1$,
  $$
  (n_1+n_2)H_{n_1+n_2} - n_1H_{n_1} - n_2H_{n_2}\geq c \ln {n_1+n_2\choose n_1}.
  $$
  \end{lemma}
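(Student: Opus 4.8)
The plan is to reduce the inequality to an asymptotic estimate via Stirling's formula, treating separately a "bulk" regime where both $n_1,n_2$ are large and a "boundary" regime where one of them is small. Write $n = n_1 + n_2$ and recall $H_m = \ln m + \gamma + O(1/m)$. The key observation is that the left-hand side, $nH_n - n_1H_{n_1} - n_2H_{n_2}$, is (up to lower-order corrections) $n\ln n - n_1\ln n_1 - n_2\ln n_2$, because the $\gamma$ terms cancel exactly ($\gamma n - \gamma n_1 - \gamma n_2 = 0$) and the $O(1)$ error terms $H_m - \ln m - \gamma = O(1/m)$ contribute only an additive constant in total. Meanwhile, by Stirling, $\ln\binom{n}{n_1} = n\ln n - n_1\ln n_1 - n_2\ln n_2 - \frac12\ln\!\bigl(\tfrac{2\pi n_1 n_2}{n}\bigr) + O(1/n_1 + 1/n_2)$. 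So both sides share the same leading term $n\ln n - n_1\ln n_1 - n_2\ln n_2 = n\,h(n_1/n)$ where $h$ is the binary entropy in nats, and the inequality amounts to controlling the discrepancy between $-\frac12\ln(2\pi n_1 n_2/n)$ on the RHS (which can be negative but only mildly so) and the $O(1)$ slack on the LHS.

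First I would handle the symmetric bulk case, say $n_1, n_2 \geq 2$. Here I claim the LHS is already at least, say, $\tfrac12 \ln\binom{n}{n_1}$ (in fact the constant can be pushed close to $1$). The cleanest route: show $\binom{n}{n_1} \leq e^{\,nH_n - n_1H_{n_1} - n_2H_{n_2}}$ directly, without Stirling. Indeed $\ln\binom{n}{n_1} = \sum_{k=1}^{n}\ln k - \sum_{k=1}^{n_1}\ln k - \sum_{k=1}^{n_2}\ln k$, and one can compare this telescoping expression with $nH_n - n_1H_{n_1} - n_2H_{n_2} = \sum_{k=1}^n (n/k)\cdot\mathbf{1} - \dots$ — more precisely, using $mH_m = \sum_{k=1}^m \lceil m/k\rceil\cdot$ no; better to use the integral bounds $\ln m! \le mH_m$ which follow since $\ln m! = \sum_{k=2}^m \ln k \le \sum_{k=2}^m \sum_{j=1}^{k}\tfrac1j - (\text{something})$. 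Rather than chase the exact combinatorial identity, I would just invoke the standard fact $mH_m \ge \ln m! \ge mH_m - m$ (both directions provable by comparing sums to the integral $\int \ln$), giving
\[
nH_n - n_1H_{n_1} - n_2H_{n_2} \ge \ln n! - \ln n_1! - \ln n_2! - n = \ln\binom{n}{n_1} - n.
\]
That is not quite strong enough because of the $-n$, so I would instead use the sharper two-sided bound $\ln m! = mH_m - m + O(\ln m)$ is false; the correct one is $m\ln m - m \le \ln m! \le m\ln m$, combined with $H_m = \ln m + \Theta(1)$. The honest approach: prove $nH_n - n_1H_{n_1} - n_2H_{n_2} \ge \ln\binom{n}{n_1}$ exactly is too much; instead prove it up to the constant $c$, by showing the ratio $\bigl(nH_n - n_1H_{n_1} - n_2H_{n_2}\bigr)\big/\ln\binom{n}{n_1}$ is bounded below by a positive constant uniformly. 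Since both numerator and denominator behave like $n\,h(n_1/n)$ with matching leading constant, and the denominator is the one with the extra $-\frac12\ln(\cdots)$ correction that can make it small only when $n_1/n$ is near $0$ or $1$, the ratio is bounded away from $0$ on the bulk.

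The main obstacle is the boundary regime, and I expect to spend most of the argument there: when $n_1 = 1$ (or symmetrically $n_2=1$). Then $\ln\binom{n}{1} = \ln n$, while the LHS is $nH_n - H_1 - (n-1)H_{n-1} = nH_n - (n-1)H_{n-1} = H_{n-1} + 1$ (using $nH_n = (n-1)H_{n-1} + H_{n-1} + n\cdot\tfrac1n - $, carefully: $nH_n - (n-1)H_{n-1} = H_{n-1} + n(H_n - H_{n-1}) = H_{n-1} + 1$). So in this case the inequality reads $H_{n-1} + 1 \ge c\ln n$, which holds since $H_{n-1} \ge \ln n - 1$, e.g. with $c=1$. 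More generally for $n_1$ a fixed small constant one gets a similar clean bound. The plan is then: (i) dispose of $\min(n_1,n_2) \le n_0$ for a suitable absolute constant $n_0$ by the explicit computation above (the LHS is $\ge H_{n-n_1} + O(1) = \ln n + O(1)$ while $\ln\binom{n}{n_1} = n_1\ln n + O(1) = O(\ln n)$, so a constant $c$ works); (ii) for $\min(n_1,n_2) > n_0$, use Stirling on all three factorials in $\ln\binom{n}{n_1}$ and the expansion $H_m = \ln m + \gamma + O(1/m)$ on the LHS, verify that the leading terms $n\ln n - n_1\ln n_1 - n_2\ln n_2$ cancel, and check that the residual on the LHS (an absolute $O(1)$) dominates $c$ times the residual $-\frac12\ln(2\pi n_1 n_2/n) + O(1)$ on the RHS after absorbing it into a slightly smaller constant $c$ — here one uses that $\ln\binom{n}{n_1} \ge \ln\binom{n}{n_0} = \Omega(\ln n) \to \infty$, so any fixed additive $O(1)$ deficit is swallowed by shrinking $c$. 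Taking $c$ to be the minimum of the constants produced in the two regimes completes the proof.
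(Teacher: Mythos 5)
Your approach (split into a boundary regime where $\min(n_1,n_2)$ is small and a bulk regime, then compare the $\gamma$-cancelling expansion of the harmonic sums with Stirling for the binomial, absorbing bounded additive errors into $c$) is consistent with the paper's one-line attribution to Stirling, and can be turned into a correct proof. But as written the sketch has loose ends. The middle paragraph is a sequence of false starts that you yourself abandon; it adds nothing. In the boundary case you compute the $n_1=1$ left-hand side as $H_{n-1}+1$, but you dropped $H_1=1$: the correct value is $nH_n - H_1 - (n-1)H_{n-1} = (H_{n-1}+1) - 1 = H_{n-1}$ (which still exceeds $\ln n$, so the conclusion survives). For $2 \le n_1 \le n_0$ you assert ``a similar clean bound'' without carrying it out. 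And the final step --- ``any fixed additive $O(1)$ deficit is swallowed by shrinking $c$'' --- needs a quantitative companion, namely that in the bulk regime $\ln\binom{n}{n_1} \ge \ln\binom{n}{\min(n_1,n_2)} \ge \min(n_1,n_2)\ln 2 \ge (n_0+1)\ln 2$, so the deficit really is dominated once $n_0$ is chosen large; you gesture at this with $\binom{n}{n_0}\to\infty$ but that only covers $n_1$ fixed, not e.g.\ $n_1=n_2=n_0+1$.

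More importantly, you dismiss the exact inequality with $c=1$ as ``too much,'' but it is in fact true and has a short proof that avoids Stirling entirely. Let $g(n_1,n_2)$ denote the difference of the two sides with $c=1$ and set $n=n_1+n_2$. Using $(m+1)H_{m+1}-mH_m = H_m+1$ and $\binom{n+1}{n_1+1}/\binom{n}{n_1} = (n+1)/(n_1+1)$, one gets
\[
g(n_1+1,n_2)-g(n_1,n_2) = \bigl(H_n - H_{n_1}\bigr) - \ln\tfrac{n+1}{n_1+1} = \sum_{k=n_1+1}^{n}\tfrac1k - \int_{n_1+1}^{n+1}\tfrac{dt}{t} \ge 0,
\]
since $1/k \ge \int_k^{k+1}dt/t$ termwise. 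Hence $g$ is nondecreasing in $n_1$ (and by symmetry in $n_2$), so it suffices to check $g(1,n_2) = H_{n_2} - \ln(n_2+1) \ge 0$, which is the same integral comparison. This gives the lemma with $c=1$, without any case split or asymptotics; you may want to replace your Stirling sketch with this.
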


\begin{lemma}
\label{lem:main}
  For a series-parallel poset $S$, we have
  $\qlb (S) \geq c\cdot \itlb (S)$ for some universal constant $c>0$.
  \end{lemma}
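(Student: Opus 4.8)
The plan is to induct on the structure of the series-parallel poset $S$, using Lemmas~\ref{lem:lbser} and~\ref{lem:lbpar} to track how $\qlb$ behaves under the two composition operations, and comparing against the behaviour of $\itlb$ under the same operations. For the series case, $\itlb(P\oplus Q) = \itlb(P) + \itlb(Q)$ since the linear extensions of $P\oplus Q$ are exactly the concatenations of a linear extension of $P$ with one of $Q$; combined with $\qlb(P\oplus Q) = \qlb(P) + \qlb(Q)$ from Lemma~\ref{lem:lbser}, the inductive inequality $\qlb \geq c\cdot\itlb$ is preserved additively with no loss. For the parallel case, the linear extensions of $P+Q$ are obtained by interleaving, so $|\Delta(P+Q)| = \binom{n_1+n_2}{n_1}\,|\Delta(P)|\,|\Delta(Q)|$, hence $\itlb(P+Q) = \itlb(P) + \itlb(Q) + \ln\binom{n_1+n_2}{n_1}$. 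Meanwhile Lemma~\ref{lem:lbpar} gives $\qlb(P+Q) = \qlb(P) + \qlb(Q) + \big(nH_n - n_1H_{n_1} - n_2H_{n_2}\big)$, and Lemma~\ref{lem:tech} tells us the extra term on the $\qlb$ side dominates a constant multiple of the extra term $\ln\binom{n_1+n_2}{n_1}$ on the $\itlb$ side. So again the inequality is preserved, with the constant $c$ from Lemma~\ref{lem:tech} controlling the ``new'' contribution.

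Concretely, I would fix $c = \min\{1, c_{\ref{lem:tech}}\}$ where $c_{\ref{lem:tech}}$ is the constant from Lemma~\ref{lem:tech}, and prove $\qlb(S) \geq c\cdot\itlb(S)$ by induction on the number of elements of $S$ (equivalently, on the size of a series-parallel decomposition tree of $S$). The base case is a singleton $S$: here $\Delta(S)$ has one element, so $\itlb(S) = 0$, and $\qlb(S) = H_{1-1}\cdot 1 = H_0 = 0$ as well (the only $d_i(\sigma) = 1$, so $H_{d_i(\sigma)-1} = H_0 = 0$), so the inequality holds trivially. For the inductive step, write $S = P\oplus Q$ or $S = P+Q$ with $P,Q$ series-parallel and strictly smaller; apply the induction hypothesis to each of $P$ and $Q$, then combine via the composition formulas above. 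In the series case:
\begin{align*}
\qlb(P\oplus Q) &= \qlb(P) + \qlb(Q) \geq c\,\itlb(P) + c\,\itlb(Q) = c\,\itlb(P\oplus Q).
\end{align*}
In the parallel case:
\begin{align*}
\qlb(P+Q) &= \qlb(P) + \qlb(Q) + nH_n - n_1H_{n_1} - n_2H_{n_2} \\
&\geq c\,\itlb(P) + c\,\itlb(Q) + c\ln\tbinom{n_1+n_2}{n_1} = c\,\itlb(P+Q),
\end{align*}
using the induction hypothesis on the first two terms and Lemma~\ref{lem:tech} (with $c \leq c_{\ref{lem:tech}}$) on the remaining harmonic-number term, together with the product formula for $|\Delta(P+Q)|$.

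The only real content beyond bookkeeping is the identity $|\Delta(P+Q)| = \binom{n_1+n_2}{n_1}|\Delta(P)||\Delta(Q)|$ and the observation that Lemma~\ref{lem:tech} is exactly tailored to absorb the resulting $\ln\binom{n_1+n_2}{n_1}$ term; everything else is a direct application of earlier lemmas. I do not expect a genuine obstacle here: the structural recursion on series-parallel posets plus the two composition lemmas essentially force the result, and the technical Stirling estimate in Lemma~\ref{lem:tech} has been isolated precisely so that this final induction goes through cleanly. The one point requiring a little care is making sure the constant $c$ chosen at the outset simultaneously works for both the base case (trivial), the series step (needs $c \leq 1$, in fact any $c$ works), and the parallel step (needs $c \leq c_{\ref{lem:tech}}$), which is why we take the minimum.
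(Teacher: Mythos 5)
Your proof is essentially identical to the paper's: the same induction on $|S|$, the same use of Lemmas~\ref{lem:lbser}, \ref{lem:lbpar}, and \ref{lem:tech}, and the same multiplicative formulas for $|\Delta(P\oplus Q)|$ and $|\Delta(P+Q)|$. The only cosmetic difference is your choice $c=\min\{1,c_{\ref{lem:tech}}\}$, whereas the paper simply takes $c=c_{\ref{lem:tech}}$ since, as you yourself note, the series step imposes no constraint on $c$.
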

\begin{proof}
  We prove the result with the constant $c$ from Lemma~\ref{lem:tech}, by induction on the number $n$ of elements of $S$.
  For $n=1$, we can check that both bounds are equal to zero.
  For $n>1$, we know that $S$ is obtained via a series or a parallel composition
  of two posets $P$ and $Q$ with respective element sets $[n_1]$ and $\{n_1+1,\ldots ,n\}$.
  We let $n_2=n-n_1$ and suppose that the statement holds for $P$ and $Q$.

  Let us first consider the case of the series composition, where $S=P\oplus Q$. We have
  $$
  |\Delta (P\oplus Q)| = |\Delta (P)|\cdot |\Delta (Q)|.
  $$
  We can apply Lemma~\ref{lem:lbser} and the induction hypothesis:
  \begin{eqnarray*}
    \qlb (P\oplus Q) & = & \qlb (P) + \qlb (Q) \\
    & \geq & c\cdot \itlb (P) + c\cdot \itlb (Q) \\
    & = & c\cdot \ln (|\Delta (P)|\cdot |\Delta (Q)|) \\
    & = & c\cdot \ln (|\Delta (P\oplus Q)|)  = c\cdot\itlb (S).
  \end{eqnarray*}

  For the case where $S=P+Q$, we have
  $$
    |\Delta (P+Q)| = |\Delta (P)|\cdot |\Delta (Q)|\cdot {n_1+n_2\choose n_1}.
  $$
  We can apply Lemma~\ref{lem:lbpar} and the induction hypothesis:
  \begin{eqnarray*}
    \qlb (P+Q) & = & \qlb (P) + \qlb (Q) + nH_n - n_1H_{n_1} - n_2H_{n_2} \\
    & \geq & c\cdot\itlb (P) + c\cdot\itlb (Q) + nH_n - n_1H_{n_1} - n_2H_{n_2} \\
    & \geq & c\cdot\ln \left(|\Delta (P)|\cdot |\Delta (Q)|\cdot {n_1+n_2\choose n_1} \right) \text{(from\ Lemma~\ref{lem:tech})}\\
    & = & c\cdot\ln (|\Delta (P+Q)|) = c\cdot\itlb (S).
    \end{eqnarray*}\end{proof}

Combining Lemmas~\ref{lem:qlb} and~\ref{lem:main} yields Theorem~\ref{thm:main}.

\paragraph*{Extending the result to a wider class of posets.}

The {\em $N$ poset} on four elements $a,b,c,d$ is such that $a\leq b$, $c\leq b$, $c\leq d$, and all the other pairs are incomparable. Its name comes from the shape of its Hasse diagram.
It is well-known that series-parallel posets are exactly the posets that forbid the $N$ poset as induced subposet~\cite{VTL82}.

We now show that the inequality in Lemma~\ref{lem:main} also holds for posets that are far from series-parallel, in the sense that they have $\Omega (n^4)$ induced $N$ subposets.
Let $N_k,k\in\mathbb{N}$ be the poset on $n=4k$ elements obtained by replacing each of the four elements of an $N$ poset by a chain of length $k$. We denote these chains by $A=(a_i)_{i\in [k]}, B=(b_i)_{i\in [k]}, C=(c_i)_{i\in [k]}, D=(d_i)_{i\in [k]}$, respectively. For any quadruple $i,j,k,\ell \in [k]^4$, the poset induced by $a_i,b_j,c_k,d_{\ell}$ is an $N$ poset, hence $N_k$ has $k^4=\Omega (n^4)$ induced $N$ subposets.

\begin{lemma}
$\itlb (N_k)=\Theta (k)$.
\end{lemma}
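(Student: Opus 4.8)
The plan is to establish matching upper and lower bounds of order $k$ on $\itlb(N_k) = \ln |\Delta(N_k)|$. The poset $N_k$ is built from four chains $A, B, C, D$ of length $k$, with relations $A \leq B$, $C \leq B$, $C \leq D$ (interpreting $X \leq Y$ as: every element of $X$ is below every element of $Y$), and no other relations. So a linear extension of $N_k$ is an interleaving of the four increasing chains subject to the constraint that (i) all of $A$ precedes all of $B$, (ii) all of $C$ precedes all of $B$, and (iii) all of $C$ precedes all of $D$.

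For the \textbf{upper bound} $\itlb(N_k) = O(k)$, I would simply observe that a linear extension is determined by a word of length $4k$ over the alphabet $\{A, B, C, D\}$ with exactly $k$ occurrences of each letter (the multiset of positions of each chain, which then must be filled in increasing order), and not every such word is valid but certainly $|\Delta(N_k)| \leq \binom{4k}{k,k,k,k} \leq 4^{4k}$. Taking logarithms gives $\itlb(N_k) \leq 4k \ln 4 = O(k)$.

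For the \textbf{lower bound} $\itlb(N_k) = \Omega(k)$, it suffices to exhibit exponentially many (in $k$) distinct linear extensions. One clean way: note that $N_k$ contains $A \oplus B$ as an induced subposet on $2k$ elements forming essentially independent structure, but more directly, consider placing $C$ first, then $A$ and $D$ in parallel (they are incomparable to each other and $A$ is incomparable to $D$ given only $C \leq D$ and $A \leq B$), then $B$ last. Any interleaving of the chain $A$ (length $k$) and the chain $D$ (length $k$) in the middle block yields a valid linear extension of $N_k$: put all of $C$ first, all of $B$ last, and any shuffle of $A$ and $D$ in between. This already gives at least $\binom{2k}{k} \geq 2^k / \mathrm{poly}(k)$ linear extensions, hence $\itlb(N_k) \geq \ln \binom{2k}{k} - O(\ln k) = \Omega(k)$. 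Combining with the upper bound yields $\itlb(N_k) = \Theta(k)$.

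The only thing requiring care is verifying that the shuffles of $A$ and $D$ sandwiched between $C$ and $B$ really are linear extensions: I need to check each defining relation. Relation $A \leq B$: every $a_i$ is in the middle block, every $b_j$ is in the last block, so $a_i$ precedes $b_j$ — fine. Relation $C \leq B$: $C$ is the first block, $B$ the last — fine. Relation $C \leq D$: $C$ is first, $D$ is in the middle block — fine. There is no relation between $A$ and $D$ to worry about, and each chain is placed in increasing internal order. So every such arrangement is indeed a valid linear extension, and distinct shuffles give distinct permutations. I don't anticipate a genuine obstacle here — the main (minor) point is just being careful about which block $A$ and $D$ go into so that all three covering relations are respected simultaneously, which the "$C$, then shuffle of $A$ and $D$, then $B$" ordering handles.
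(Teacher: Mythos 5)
Your proof is correct and takes essentially the same approach as the paper: both sides of $\ln|\Delta(N_k)| = \Theta(k)$ come from exhibiting at least $\binom{2k}{k}$ linear extensions (you do this explicitly via the ``$C$, then a shuffle of $A$ and $D$, then $B$'' ordering, while the paper equivalently compares $N_k$ to a refining poset with that many extensions) together with a crude counting upper bound (you use the multinomial $\binom{4k}{k,k,k,k}$, the paper uses $\binom{4k}{2k}$ by comparing to the coarser poset $(A\oplus B)+(C\oplus D)$). Both upper bounds are $\exp(O(k))$, so the conclusions agree.
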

\begin{proof}
  We have
  $$|\Delta (N_k)| > |\Delta (A+B)| = {2k\choose k} \sim 4^k/\sqrt{\pi k}.$$
  Also,
  $$|\Delta (N_k)| < |\Delta ((A\oplus B) + (C\oplus D))| = {4k\choose 2k} \sim 4^{2k}/\sqrt{\pi 2 k}.$$
  
  Hence $\itlb (N_k) = \ln (|\Delta (N_k)|) = \Theta (k)$.
\end{proof}

\begin{lemma}
$\qlb (N_k)=\Omega (k)$.
\end{lemma}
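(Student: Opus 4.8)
The plan is to lower-bound $\qlb(N_k)$ by retaining only the contribution of the chain $D$ and then evaluating that contribution exactly through the order polytope. Since every term $H_{d_i(\sigma)-1}$ in~\eqref{eqn:qlb} is nonnegative, $\qlb(N_k)\ge\sum_{i\in D}\Exp{\sigma\in\Delta(N_k)}{H_{d_i(\sigma)-1}}$. Applying Lemma~\ref{lem:exp-ln} with $n=4k$ and the averaging identity $\Exp{\sigma\in\Delta(N_k)}{\Exp{y\in\mathcal{O}(\sigma)}{\cdot}}=\Exp{y\in\mathcal{O}(N_k)}{\cdot}$ (the one used in the proof of Theorem~\ref{thm:improved}), this lower bound equals $kH_{4k}+\Exp{y\in\mathcal{O}(N_k)}{\sum_{i\in D}\ln d_i(y)}$, so the whole question is reduced to understanding Stanley's transfer map restricted to the $D$-coordinates.

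Next I would compute that $D$-part. For $y\in\mathcal{O}(N_k)$, consistency with $N_k$ forces $y_{c_1}\le\cdots\le y_{c_k}\le y_{d_1}\le\cdots\le y_{d_k}\le 1$, while the elements below $d_1$ are exactly $c_1,\dots,c_k$ and those below $d_m$ (for $2\le m\le k$) are exactly $d_1,\dots,d_{m-1},c_1,\dots,c_k$; hence $d_{d_1}(y)=y_{d_1}-y_{c_k}$ and $d_{d_m}(y)=y_{d_m}-y_{d_{m-1}}$, so $\sum_{i\in D}\ln d_i(y)$ is the sum of the logarithms of the first $k$ spacings of the points $y_{d_1}\le\cdots\le y_{d_k}$ inside the interval $[\,y_{c_k},1\,]$. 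The key structural point is that the $D$-coordinates occur in no constraint of $\mathcal{O}(N_k)$ other than $y_{c_k}\le y_{d_1}\le\cdots\le y_{d_k}\le 1$ (nothing lies above $D$ in $N_k$), so conditionally on all other coordinates they are uniform on that simplex; rescaling by $1-y_{c_k}$ turns these spacings into the $\mathrm{Dirichlet}(1,\dots,1)$ spacings of $k$ uniform points in $[0,1]$, each marginally $\mathrm{Beta}(1,k)$ with expected logarithm $-H_k$. Taking the conditional expectation and averaging then gives $\Exp{y\in\mathcal{O}(N_k)}{\sum_{i\in D}\ln d_i(y)}=k\,\Exp{y\in\mathcal{O}(N_k)}{\ln(1-y_{c_k})}-kH_k$.

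It then remains to bound $\Exp{y\in\mathcal{O}(N_k)}{\ln(1-y_{c_k})}$ from below. I would use that $c_k$ lies below all $2k$ elements of $B\cup D$, so $\sigma(c_k)\le 2k$ for every $\sigma\in\Delta(N_k)$; since a uniform point of a simplex $\mathcal{O}(\sigma)$ is the vector of order statistics of $4k$ i.i.d.\ uniforms arranged along $\sigma$, the coordinate $1-y_{c_k}$ is $\mathrm{Beta}(4k+1-\sigma(c_k),\sigma(c_k))$, whence $\Exp{y\in\mathcal{O}(\sigma)}{\ln(1-y_{c_k})}=H_{4k-\sigma(c_k)}-H_{4k}\ge H_{2k}-H_{4k}$, and averaging over $\sigma$ preserves this. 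Combining the three pieces,
$$
\qlb(N_k)\ \ge\ kH_{4k}+k\bigl(H_{2k}-H_{4k}\bigr)-kH_k\ =\ k\bigl(H_{2k}-H_k\bigr)\ \ge\ \tfrac{k}{2},
$$
because $H_{2k}-H_k=\sum_{j=k+1}^{2k}\tfrac1j\ge k\cdot\tfrac1{2k}$, which yields $\qlb(N_k)=\Omega(k)$.

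I expect the only delicate point to be the second step: correctly identifying which coordinates of $\mathcal{O}(N_k)$ genuinely constrain the $D$-block (so that the conditional law of $(y_{d_1},\dots,y_{d_k})$ is exactly the claimed uniform one) and recognizing the resulting integrand as the logarithm of uniform spacings shifted by $y_{c_k}$. Everything else is routine harmonic-number bookkeeping together with the order-statistics description of the order polytope already established in the appendix.
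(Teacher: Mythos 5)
Your proof is correct, and it takes a genuinely different route from the paper's. The paper's argument begins by asserting (without proof) that $\qlb$ is monotone under extension, i.e.\ $\qlb(P)\ge\qlb(Q)$ whenever $Q$ extends $P$; it then passes to the series-parallel poset $(A+B)\oplus(C+D)$, which extends $N_k$, and applies Lemmas~\ref{lem:lbser}, \ref{lem:lbpar} and~\ref{lem:tech} to conclude $\qlb(N_k)\ge\qlb((A+B)\oplus(C+D))=4k(H_{2k}-H_k)\ge c\ln\binom{2k}{k}=\Omega(k)$. You instead bound $\qlb(N_k)$ directly from formula~\eqref{eqn:qlb}: you drop the nonnegative contributions of $A$, $B$, $C$, and compute the $D$-contribution exactly via the order polytope, exploiting that nothing lies above $D$ in $N_k$, so that the $D$-block is conditionally uniform on a simplex with $\mathrm{Beta}(1,k)$ spacings, and then control the remaining factor $\ln(1-y_{c_k})$ by the rank bound $\sigma(c_k)\le 2k$ together with the order-statistics identities from the appendix. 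Your route is more self-contained --- it avoids the unproved monotonicity assertion, which is worth flagging given that the conclusion explicitly notes $E_{z\in C}[h(z)]$ is \emph{not} monotone under inclusion for general convex corners $C$ --- and it produces an explicit constant $\qlb(N_k)\ge k/2$. The paper's route, once monotonicity is granted, is shorter because it reuses the series/parallel composition machinery already established, and it delivers a marginally better constant.
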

\begin{proof}
  First note that if a poset $Q$ extends a poset $P$, then $\qlb (P)\geq \qlb (Q)$.
  We notice that $(A+B)\oplus(C+D)$ extends $N_k$.
Therefore,
    \begin{eqnarray*}
    \qlb (N_k) & \geq & \qlb ((A+B)\oplus(C+D)) \\
    & = & \qlb (A+B) + \qlb (C+D) \text{\ (from\ Lemma~\ref{lem:lbser})}\\
    & = & 2\cdot (2kH_{2k} - 2kH_k) \text{\ (from\ Lemma~\ref{lem:lbpar})}\\
    & \geq & c\cdot \ln {2k\choose k} \text{\ (from\ Lemma~\ref{lem:tech})}\\
    & = & \Omega (k).
    \end{eqnarray*}
\end{proof}

A poset is an {\em extended series-parallel} poset if and only if it is either (i) a singleton, (ii) isomorphic to $N_k$ for some $k$, or (iii) it can be obtained by a series or parallel composition of two extended series-parallel posets.
The two lemmas directly imply the following analogue of Theorem~\ref{thm:main} to extended series-parallel posets.
\begin{corollary}
The quantum query complexity of sorting under partial information, given an extended series-parallel poset $P$, is at least $c\log (|\Delta (P)|)$ for some universal constant $c>0$.
\end{corollary}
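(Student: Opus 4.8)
The plan is to reprove Lemma~\ref{lem:main} over the recursive structure of extended series-parallel posets, inserting $N_k$ as a second base case of the induction. Concretely, I would show that $\qlb(S)\geq c'\cdot\itlb(S)$ for every extended series-parallel poset $S$, where $c'=\min\{c,c_N\}$, $c$ is the constant of Lemma~\ref{lem:tech}, and $c_N>0$ is a constant satisfying $\qlb(N_k)\geq c_N\cdot\itlb(N_k)$ for all $k\geq1$. Such a $c_N$ exists: the last two lemmas give $\qlb(N_k)=\Omega(k)$ and $\itlb(N_k)=\Theta(k)$, so the ratio $\qlb(N_k)/\itlb(N_k)$ is bounded below by a positive constant for all large $k$, while the finitely many remaining values of $k$ contribute only a positive constant to the infimum, since there $\itlb(N_k)>0$ ($N_k$ is not a chain) and $\qlb(N_k)\geq 2(2kH_{2k}-2kH_k)>0$ by the computation in the proof of the last lemma.

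The induction is on the number $n$ of elements of $S$, and both composition cases are handled verbatim as in the proof of Lemma~\ref{lem:main}. If $S$ is a singleton then $\qlb(S)=\itlb(S)=0$; if $S\cong N_k$ then the inequality holds by the choice of $c_N$. Otherwise the definition presents $S$ as a series or parallel composition of two extended series-parallel posets $P$ and $Q$, each on fewer than $n$ elements, so that the induction hypothesis applies to both. For $S=P\oplus Q$, Lemma~\ref{lem:lbser} and $|\Delta(P\oplus Q)|=|\Delta(P)|\cdot|\Delta(Q)|$ give $\qlb(S)=\qlb(P)+\qlb(Q)\geq c'\ln(|\Delta(P)|\cdot|\Delta(Q)|)=c'\cdot\itlb(S)$. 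For $S=P+Q$, Lemma~\ref{lem:lbpar} with the induction hypothesis, followed by Lemma~\ref{lem:tech} applied to $|\Delta(P+Q)|=|\Delta(P)|\cdot|\Delta(Q)|\cdot\binom{n_1+n_2}{n_1}$, gives $\qlb(S)\geq c'\ln|\Delta(P+Q)|=c'\cdot\itlb(S)$. Combining this with Lemma~\ref{lem:qlb}, which lower-bounds $\adv(\sort_P)$ --- and hence the quantum query complexity --- by a constant times $\qlb(P)$, then yields the corollary, since $\log|\Delta(P)|$ and $\itlb(P)=\ln|\Delta(P)|$ differ only by a constant factor.

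I do not expect a genuine obstacle, since every ingredient is already in place; the one delicate point is the base-case constant $c_N$, namely checking that a single constant works uniformly over all $k\geq1$ and not merely asymptotically. A secondary remark one may wish to record --- though it is not strictly needed, as the recursive definition already hands us a composition in every non-base case --- is that $N_k$ genuinely occurs only as a base case: it is connected, ruling out a parallel decomposition, and it has incomparable pairs among both its maximal and its minimal elements, ruling out a series decomposition.
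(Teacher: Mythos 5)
Your proof is correct and is exactly the argument the paper intends when it writes that the two lemmas ``directly imply'' the corollary: rerun the induction of Lemma~\ref{lem:main} on $\lvert S\rvert$, adding $N_k$ as a second base case and taking the final constant to be $\min\{c, c_N\}$. The only thing the paper leaves implicit that you rightly flag and then settle is the uniformity of $c_N$ over all $k\geq 1$ (not just asymptotically), which follows as you say because $\qlb(N_k)$ and $\itlb(N_k)$ are both positive for each fixed $k\geq 1$ and their ratio is bounded below for large $k$.
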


\section*{Conclusion}

The analyses of $\sort_P$ in the classical and quantum cases rely on the following quantities, where $h(z)=-\frac 1n \sum_{i\in [n]} \ln z_i$.
\begin{center}
\begin{tabular}{|c|c|}
\hline 
 Classical & Quantum \\
\hline 
 $\H(P) = \min_{z \in \C(P)} h(z)$ & $\QH (P) = E_{z \in \C(P)} \left[ h(z)\right]$ \\
 $\lb (P)= n (\ln n - \H(P))$ & $\qlb (P)=n (H_n - \QH (P))$ \\
\hline 
\end{tabular}
\end{center}

Our findings support the conjecture that the two lower bounds $\lb(P)$ and $\qlb(P)$ are within a constant factor of each other for all posets $P$.
Proving this would require to better understand how the quantities $\H(P)$ and $\QH(P)$ behave relative to each other.
However, we seem to be lacking tools to analyze the quantity $E_{z \in C}\left[ h(z)\right]$ defined on an arbitrary convex corner $C$.
In particular, unlike the entropy, it is not monotone with respect to inclusion of $C$.

\section*{Acknowledgements} 
The authors thank Samuel Fiorini and Davy Paindaveine for helpful discussions. 

\renewcommand{\em}{\it}
\bibliographystyle{plain}
\bibliography{qsortref}

\appendix

\section{Proof of Lemma~\ref{lem:denominator}}

\begin{proof}[Proof of Lemma~\ref{lem:denominator}] 
Let $A$ be the Hilbert matrix with $A_{kl}=\frac{1}{k+l-1}$ for $1\leq k,l<n-1$.   
Let $B$ be the matrix with entries $B_{kl}=\frac{\delta[k+l\leq n]}{k+l-1}$ for $1\leq k,l<n-1$. 
Note that $\norm{B}\leq\norm{A}\leq \pi$. 
We will show that $\norm{\Gamma^{jj'}} \leq 2\norm{B}$ for all $j, j'$, which implies the lemma. 

We first note that for $\tau=\sigma^{(k,d)}$, and assuming $\sigma(j)>\sigma(j')$ the matrix element 
$\Gamma^{jj'}_{\sigma\tau}$
is non-zero only if
\begin{align*}
\left\{
\begin{array}{l}
j=\sigma^{-1}(k+d)=\tau^{-1}(k)\\
j'=\sigma^{-1}(k+i)=\tau^{-1}(k+i+1)
\end{array}
\right.
\end{align*}
for some $0\leq i\leq d-1$. Indeed, in that case we have $\sigma(j)>\sigma(j')$ and $\tau(j)<\tau(j')$. Therefore, 
\begin{align*}
 \Gamma^{jj'}
=\sum_\sigma\sum_{k=1}^{n-1}\sum_{d=1}^{n-k}\frac{1}{d}\left[\sum_{i=0}^{d-1}\delta[\sigma(j)=k+d]\cdot\delta[\sigma(j')=k+i]\right]\left[\ketbra{\sigma}{\sigma^{(k,d)}}+\ketbra{\sigma^{(k,d)}}{\sigma}\right],
\end{align*}
where the last term covers cases where $\sigma(j)<\sigma(j')$. Manipulating this expression, we obtain
\begin{align*}
  \Gamma^{jj'}
  &= \sum_{d=1}^{n-1}\frac{1}{d}\sum_{i=0}^{d-1}\sum_{\sigma:\sigma(j')=\sigma(j)-d+i}\left[\ketbra{\sigma}{\sigma^{(\sigma(j)-d,d)}}+\ketbra{\sigma^{(\sigma(j)-d,d)}}{\sigma}\right]\\
 &=\sum_{l,m=1}^{n-1}\frac{\delta[l+m\leq n]}{l+m-1}\sum_{\sigma:\sigma(j')=\sigma(j)-l}\left[\ketbra{\sigma}{\sigma^{(\sigma(j)-l-m+1,l+m-1)}} + \ketbra{\sigma^{(\sigma(j)-l-m+1,l+m-1)}}{\sigma}\right]\\
&=\sum_{l,m=1}^{n-1}B_{ml}\sum_{\sigma:\sigma(j')=\sigma(j)-l}\left[\ketbra{\sigma}{\sigma_{(l,m)}}+\ketbra{\sigma_{(l,m)}}{\sigma}\right],
\end{align*}
where the second equality follows from the change of variables $l=d-i$ and $m=i+1$, and in the last line we have used the notation $\sigma_{(l,m)}=\sigma^{(\sigma(j)-l-m+1,l+m-1)}$. Note that when $\sigma$ runs over all permutations such that $\sigma(j')=\sigma(j)-l$, then $\tau=\sigma_{(l,m)}$ runs over all permutations such that  $\tau(j')=\tau(j)+m$.

By definition, we have $\norm{\Gamma^{jj'}}=\max_{\ket{v}}|\bra{v}\Gamma^{jj'}\ket{v}|$, where the maximization is over all unit vectors $\ket{v}$. For any such vector, let $v_\sigma=\braket{\sigma}{v}$, and $\alpha_l$ and $\beta_m$ be defined as
\begin{align*}
 \alpha_l&=\sqrt{\sum_{\sigma:\sigma(j')=\sigma(j)-l}v_\sigma^2} &
 \beta_m&=\sqrt{\sum_{\tau:\tau(j')=\tau(j)+m}v_\tau^2}
\end{align*}
Then, we have
\begin{align*}
 \abs{\bra{v}\Gamma^{jj'}\ket{v}}
\leq 2\sum_{l,m=1}^{n-1}B_{ml}\sum_{\sigma:\sigma(j')=\sigma(j)-l}\abs{v_{\sigma}v_{\sigma_{(l,m)}}}
\leq 2\sum_{l,m=1}^{n-1}B_{ml}\alpha_l\beta_m
\leq 2\norm{B},
\end{align*}
where in the first inequality we have used the fact that $B_{ml}\geq 0$, and the second inequality follows from Cauchy-Schwarz.
\end{proof}

\section{Order statistics}

For $0\leq k<n$, we define the probability density $f_{n,k}$ for random variable $z\in[0,1]$ as
\begin{align*}
 f_{n,k}[z=s]=n\binom{n-1}{k} s^k(1-s)^{n-k-1}.
\end{align*}
Note that $f_{1,0}$ is the density of a uniformly distributed random variable over $[0,1]$.

The following integrals will be useful. 

\begin{lemma}
\label{lem:useful-integrals}
\begin{align}
  I_{n,k}(s)=k\binom{n}{k}\int_{0}^{1-s}dt\ t^{n-k}(1-t-s)^{k-1}&=(1-s)^n&\forall 0\leq s\leq 1\label{eq:Ink}\\
 J_{n,k}(s)=\Prob{z\sim f_{n,k}}{z\leq 1-s}=n\binom{n-1}{k} \int_0^{1-s}dt\ t^k(1-t)^{n-k-1}&=\sum_{l=k+1}^n\binom{n}{l}s^{n-l}(1-s)^l&\forall 0\leq s\leq 1\label{eq:Jnk}\\
H_{n,k}=\Exp{z\sim f_{n,k}}{\ln z}=n\binom{n-1}{k} \int_0^1 dt\ t^k(1-t)^{n-k-1}\ln t&=H_k-H_n\label{eq:Hnk}
 \end{align}
\end{lemma}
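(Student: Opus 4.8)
The final statement to prove is Lemma~\ref{lem:useful-integrals}, consisting of the three integral identities \eqref{eq:Ink}, \eqref{eq:Jnk}, and \eqref{eq:Hnk}.

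\medskip

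\textbf{Proof plan.} The three identities are classical facts about Beta integrals and order statistics, and I would prove each by a short self-contained computation.

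For \eqref{eq:Ink}, the plan is to recognize $I_{n,k}(s)$ as a scaled incomplete Beta integral and evaluate it by the substitution $t = (1-s)u$, which rescales the interval $[0,1-s]$ to $[0,1]$. After this substitution the integral becomes $k\binom{n}{k}(1-s)^n \int_0^1 u^{n-k}(1-u)^{k-1}\,du = k\binom{n}{k}(1-s)^n B(n-k+1,k)$, and the claim reduces to the elementary identity $k\binom{n}{k} B(n-k+1,k) = k\binom{n}{k}\frac{(n-k)!\,(k-1)!}{n!} = 1$. So the key step is the linear change of variables plus bookkeeping of the Beta-function normalization.

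For \eqref{eq:Jnk}, the plan is to show both sides agree by differentiating in $s$ (or equivalently integrating by parts). Write $J_{n,k}(s) = n\binom{n-1}{k}\int_0^{1-s} t^k(1-t)^{n-k-1}\,dt$; its derivative with respect to $s$ is $-n\binom{n-1}{k}(1-s)^k s^{n-k-1}$. Then I would check that the claimed closed form $\sum_{l=k+1}^n \binom{n}{l} s^{n-l}(1-s)^l$ has the same derivative in $s$ — the sum telescopes when differentiated, using $\binom{n}{l}\cdot l = n\binom{n-1}{l-1}$ and $\binom{n}{l}(n-l) = n\binom{n-1}{l}$ — and that both sides vanish at $s=1$. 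Alternatively, and perhaps more cleanly, one can repeatedly integrate $\int_0^{1-s} t^k(1-t)^{n-k-1}\,dt$ by parts, peeling off one power of $(1-t)$ at a time, which directly generates the terms of the finite sum; I would present whichever is shorter. This identity is just the standard fact that the CDF of a Beta$(k+1,n-k)$ variable equals the upper tail of a Binomial$(n,1-s)$, so no genuine difficulty arises.

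For \eqref{eq:Hnk}, the plan is to use the standard trick $\ln t = \frac{d}{d\alpha}\big|_{\alpha=0} t^{\alpha}$, or equivalently differentiate the Beta integral $B(k+1+\alpha, n-k) = \int_0^1 t^{k+\alpha}(1-t)^{n-k-1}\,dt = \frac{\Gamma(k+1+\alpha)\Gamma(n-k)}{\Gamma(n+1+\alpha)}$ with respect to $\alpha$ at $\alpha=0$. Differentiating the right-hand side gives a factor $\psi(k+1) - \psi(n+1)$ where $\psi$ is the digamma function, and using $\psi(m+1) = H_m - \gamma$ this equals $H_k - H_n$; multiplying by the normalization $n\binom{n-1}{k} = 1/B(k+1,n-k)$ yields exactly $H_{n,k} = H_k - H_n$. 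The main (very mild) obstacle here is simply invoking $\psi(m+1) = -\gamma + H_m$ cleanly; if one prefers to avoid the digamma function entirely, one can instead prove \eqref{eq:Hnk} by induction on $n-k-1$ using integration by parts, the base case being $\int_0^1 t^k \ln t\,dt = -1/(k+1)^2$ together with $\binom{n-1}{k}$ with $n=k+1$ giving $H_{k+1}-H_{k+1}$... so I would set it up as: integrate by parts in $t$ to relate $H_{n,k}$ to $H_{n-1,k}$ (or to $H_{n,k+1}$), and check the resulting recurrence matches $H_k - H_n$. I expect the digamma route to be the cleanest to write, with the inductive route as a fallback if we want to keep the appendix fully elementary.

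\medskip

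Overall I do not anticipate any real obstacle: all three identities are routine manipulations of Beta integrals, and the only choice to make is between the slick generating-function/digamma arguments and the fully elementary integration-by-parts inductions. I would lead with the change of variables for \eqref{eq:Ink}, then derive \eqref{eq:Jnk} by integration by parts (which also incidentally re-proves a special case of \eqref{eq:Ink}), and finally obtain \eqref{eq:Hnk} by differentiating the Beta function.
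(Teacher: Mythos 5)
Your plan is correct, and for each of the three identities your preferred route is genuinely different from the paper's. The paper keeps the entire appendix elementary: for each of \eqref{eq:Ink}, \eqref{eq:Jnk}, \eqref{eq:Hnk} it computes a base case by direct integration ($I_{n,1}$, $J_{n,0}$, $H_{k+1,k}$) and then derives a one-step recurrence by a single integration by parts ($I_{n,k}=I_{n,k-1}$; $J_{n,k}=J_{n,k-1}-\binom{n}{k}s^{n-k}(1-s)^k$; $H_{n,k}=\tfrac{1}{n+1}+H_{n+1,k}$), closing each by induction. You instead lead with the Beta-function machinery: the substitution $t=(1-s)u$ reducing $I_{n,k}$ to $k\binom{n}{k}(1-s)^n B(n-k+1,k)=(1-s)^n$; differentiating both sides of \eqref{eq:Jnk} in $s$ and observing the telescoping (plus the boundary value at $s=1$); and differentiating $B(k+1+\alpha,n-k)$ at $\alpha=0$ to pull out $\psi(k+1)-\psi(n+1)=H_k-H_n$. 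All three shortcuts are valid and I have checked the bookkeeping (in particular $k\binom{n}{k}B(n-k+1,k)=1$ and $n\binom{n-1}{k}B(k+1,n-k)=1$). What your approach buys is brevity and a clear conceptual reading of the identities as Beta/Binomial/digamma facts; what the paper's approach buys is self-containment, avoiding the Gamma and digamma functions entirely, which matters slightly since the appendix is meant to be checkable with nothing beyond integration by parts. Your offered fallbacks for \eqref{eq:Jnk} (repeated integration by parts) and \eqref{eq:Hnk} (recurrence in $n$) are essentially the paper's proofs, with the small difference that the paper's $H_{n,k}$ recurrence goes upward ($H_{n,k}=\tfrac{1}{n+1}+H_{n+1,k}$) rather than downward, though these are equivalent.
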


\begin{lemma}
\label{lem:density-order-polytope}
If $z\in\mathbb{R}^n$ is uniformly distributed over the simplex $0\leq z_1\leq z_2\leq\ldots\leq z_n\leq 1$, then
\begin{align*}
 f[z_{i+d}-z_i=s]=f_{n,d-1}[s].
\end{align*}
\end{lemma}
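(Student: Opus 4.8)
The plan is to derive the density of the gap $z_{i+d}-z_i$ by marginalising the uniform distribution on the simplex $S=\{0\le z_1\le\cdots\le z_n\le 1\}$. Since $S$ has volume $1/n!$, this distribution has constant density $n!$ on $S$. Fix the values $z_i=a$ and $z_{i+d}=a+s$. Conditioned on these, the remaining coordinates split into three blocks whose constraints no longer interact: the increasing chain $0\le z_1\le\cdots\le z_{i-1}\le a$, ranging over a simplex of volume $a^{i-1}/(i-1)!$; the chain $a\le z_{i+1}\le\cdots\le z_{i+d-1}\le a+s$, of volume $s^{d-1}/(d-1)!$; and the chain $a+s\le z_{i+d+1}\le\cdots\le z_n\le 1$, of volume $(1-a-s)^{n-i-d}/(n-i-d)!$. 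Multiplying by $n!$, the joint density of $(z_i,z_{i+d})$ at $(a,a+s)$ equals
$$
\frac{n!}{(i-1)!\,(d-1)!\,(n-i-d)!}\; a^{i-1}\,s^{d-1}\,(1-a-s)^{n-i-d}
$$
on the region $a\ge 0$, $s\ge 0$, $a+s\le 1$, with the usual convention that an empty product is $1$ (which covers the extremal cases $i=1$ and $i+d=n$).

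Next I would integrate out $a$ over $[0,1-s]$. The substitution $a=(1-s)t$ turns $\int_0^{1-s}a^{i-1}(1-a-s)^{n-i-d}\,da$ into $(1-s)^{n-d}\int_0^1 t^{i-1}(1-t)^{n-i-d}\,dt=(1-s)^{n-d}\,\frac{(i-1)!\,(n-i-d)!}{(n-d)!}$, a Beta integral of the same kind as the one underlying $I_{n,k}$ in Lemma~\ref{lem:useful-integrals}. Substituting back, all factorials cancel except those in $\frac{n!}{(d-1)!\,(n-d)!}=n\binom{n-1}{d-1}$, so the density of $z_{i+d}-z_i$ equals $n\binom{n-1}{d-1}\,s^{d-1}(1-s)^{n-d}=f_{n,d-1}[s]$, which is the claim.

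I do not anticipate any real obstacle: this is a routine order-statistics computation, and the only delicate points are the bookkeeping of exponents and the remark that, once $z_i$ and $z_{i+d}$ are pinned down, the three blocks of remaining coordinates are constrained independently of one another (the couplings in the simplex between consecutive blocks are exactly the already-fixed values $z_i$ and $z_{i+d}$). A more conceptual alternative, which I would also mention, is to realise the uniform point of $S$ as the order statistics $U_{(1)}\le\cdots\le U_{(n)}$ of $n$ independent uniform variables on $[0,1]$; then $z_{i+d}-z_i$ is a sum of $d$ consecutive spacings, the $n+1$ spacings of uniform order statistics are exchangeable (jointly Dirichlet with all parameters equal to $1$), so this gap has the same distribution as $U_{(d)}$, whose density is the Beta$(d,n-d+1)$ density $n\binom{n-1}{d-1}s^{d-1}(1-s)^{n-d}=f_{n,d-1}[s]$.
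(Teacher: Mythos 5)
Your main argument is correct and follows essentially the same route as the paper: compute the joint density of $(z_i,z_{i+d})$, then marginalise out $z_i$. The paper obtains the joint density by ``successive integrations'' and evaluates the remaining integral by identifying it as $I_{n',k}(s)$ from Lemma~\ref{lem:useful-integrals} (with $n'=n-d$, $k=n-i-d+1$), whereas you obtain the joint density by directly multiplying the volumes of the three decoupled sub-simplices and evaluate the integral as a Beta integral after the substitution $a=(1-s)t$; these are the same computation, since $I_{n,k}(s)$ is precisely the Beta integral in elementary clothing (the paper proves it by repeated integration by parts to avoid invoking the Gamma function).

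Your closing remark does offer a genuinely different route: realise the uniform point of the simplex as the order statistics of $n$ i.i.d.\ uniforms, observe that $z_{i+d}-z_i$ is a sum of $d$ consecutive spacings, and use the exchangeability of the $n+1$ uniform spacings (jointly Dirichlet with all parameters $1$) to conclude that this sum is distributed as the sum of the first $d$ spacings, i.e.\ as $U_{(d)}\sim\mathrm{Beta}(d,n-d+1)$. This bypasses all the integral manipulations at once and makes the key structural fact---that the answer is independent of $i$---immediate rather than an artifact of factorial cancellation. The paper's route is more self-contained (it needs only the integrals it proves from scratch in Lemma~\ref{lem:useful-integrals}, which are reused in Lemma~\ref{lem:exp-ln}), while yours imports a standard fact from order statistics but is shorter and more transparent.
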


\section{Proof of Lemma~\ref{lem:useful-integrals}}

\begin{proof}
 For Equation~(\ref{eq:Ink}), let us first evaluate $I_{n,1}$
\begin{align*}
 I_{n,1}&=n\int_{0}^{1-s}dt\ t^{n-1}=\left[t^n\right]_{0}^{1-s}=(1-s)^n
\end{align*}
For $k>1$, integration by parts leads to
\begin{align*}
 I_{n,k}&=\frac{n!}{(k-1)!(n-k)!}\int_{0}^{1-s}dt\ t^{n-k}(1-t-s)^{k-1}\\
&=\frac{n!}{(k-1)!(n-k+1)!}\left[t^{n-k+1}(1-t-s)^{k-1}\right]_0^{1-s}
+\frac{n!}{(k-2)!(n-k+1)!}\int_{0}^{1-s}dt\ t^{n-k+1}(1-t-s)^{k-2}\\
&=0+(k-1)\binom{n}{k-1}\int_{0}^{1-s}dt\ \frac{t^{n-k+1}}{(n-k+1)!}(1-t-s)^{k-2}\\
&=I_{n,k-1}
\end{align*}
By induction, we therefore have $I_{n,k}=I_{n,1}=(1-s)^n$.
 For Equation~(\ref{eq:Jnk}), we first evaluate $J_{n,0}$
\begin{align*}
 J_{n,0}&=n\int_0^{1-s}dt\ (1-t)^{n-1}=-\left[(1-t)^n\right]_0^{1-s}=1-s^n
\end{align*}
For $k>0$, integration by parts leads to
\begin{align*}
J_{n,k}&=n\binom{n-1}{k} \int_0^{1-s}dt\ t^k(1-t)^{n-k-1}\\ 
&=(n-k)\binom{n}{k}\int_0^{1-s}dt\ t^k(1-t)^{n-k-1}\\
&=\binom{n}{k}\left\{-\left[t^k(1-t)^{n-k}\right]_0^{1-s}+k\int_0^{1-s}dt\ t^{k-1}(1-t)^{n-k}\right\}\\
&=-\binom{n}{k}s^{n-k}(1-s)^k+n\binom{n-1}{k-1}\int_0^{1-s}dt\ t^{k-1}(1-t)^{n-k}\\
&=-\binom{n}{k}s^{n-k}(1-s)^k+J_{n,k-1}
\end{align*}
By induction, this leads to
\begin{align*}
 J_{n,k}&=-\sum_{l=1}^k\binom{n}{l}s^{n-l}(1-s)^l+J_{n,0}\\
&=1-\sum_{l=0}^k\binom{n}{l}s^{n-l}(1-s)^l\\
&=\sum_{l=k+1}^n\binom{n}{l}s^{n-l}(1-s)^l
\end{align*}
where in the last line we have used the fact that
\begin{align*}
 \sum_{l=0}^n\binom{n}{l}s^{n-l}(1-s)^l=\left[s+(1-s)\right]^n=1
\end{align*}
 For Equation~(\ref{eq:Hnk}), we first evaluate $H_{k+1,k}$.  
\begin{align*}
 H_{k+1,k}&=(k+1) \int_0^1 dt\ t^k\ln t
=\left[t^{k+1}\ln t\right]_0^1-\int_0^1 dt\ t^{k+1}\frac{1}{t}\\
&=-\frac{1}{k+1}\left[t^{k+1}\right]_0^1=-\frac{1}{k+1}=H_k-H_{k+1}
\end{align*}
For $n>k+1$, we need to evaluate
\begin{align*}
 H_{n,k}&=(n-k)\binom{n}{k} \int_0^1 dt\ t^k(1-t)^{n-k-1}\ln t,
\end{align*}
which we integrate by parts by setting $(n-k)t^k(1-t)^{n-k-1}\ln t=u\cdot v'$ with
\begin{align*}
u&=-t^{n+1}\ln t & u'&=-t^n\left[1+(n+1)\ln t\right]\\
v&=\left(\frac{1-t}{t}\right)^{n-k} & v'&=-(n-k)\frac{(1-t)^{n-k-1}}{t^{n-k+1}}
\end{align*}
which leads to
\begin{align*}
 H_{n,k}&=-\binom{n}{k}\left[t^{k+1}(1-t)^{n-k}\ln t\right]_0^1+\binom{n}{k}\int_0^1 dt\ t^k(1-t)^{n-k}\left[1+(n+1)\ln t\right]\\
&=0+\binom{n}{k}\int_0^1 dt\ t^k(1-t)^{n-k} + \binom{n}{k}\int_0^1 dt\ (n+1)t^k(1-t)^{n-k}\ln t\\
&=\frac{n+1}{n+1}\binom{n}{k}\int_0^1 dt\ t^k(1-t)^{n-k}+(n+1-k)\binom{n+1}{k}\int_0^1 dt\ t^k(1-t)^{n-k}\ln t\\
&=\frac{1}{n+1}J_{n+1,k}(0)+H_{n+1,k}\\
&=\frac{1}{n+1}+H_{n+1,k}
\end{align*}
This implies
\begin{align*}
 H_{n+1,k}&=H_{n,k}-\frac{1}{n+1},
\end{align*}
which by induction leads to
\begin{align*}
 H_{n,k}&=H_{k+1,k}-\sum_{m=k+2}^{n}\frac{1}{m}\\
&=H_k-H_{k+1}-(H_n-H_{k+1})\\
&=H_k-H_n
\end{align*}
\end{proof}

\section{Proof of Lemma~\ref{lem:density-order-polytope}}

\begin{proof}
 The joint probability density of $z=(z_1,z_2,\ldots,z_n)$ is given by
\begin{align*}
 f[z_1=s_1,\ldots,z_n=s_n]=n!\ \delta(s_1\leq s_2\leq \ldots\leq s_n).
\end{align*}
By successive integrations, we obtain for $i<j$
\begin{align*}
 f[z_i=s_i,z_j=s_j]=\frac{n!}{(i-1)!(n-j)!(j-i-1)!}s_i^{i-1}(1-s_j)^{n-j}(s_j-s_i)^{j-i-1}\delta(s_i<s_j).
\end{align*}
Therefore, for $j=i+d$, we have
\begin{align*}
& f[z_{i+d}-z_i=s] \\
&= \int_0^1 ds_i\int_0^1 ds_{i+d} f[z_i=s_i,z_j=s_j]\ \delta[s_{i+d}-s_i=s]\\
&=\frac{n!}{(i-1)!(n-i-d)!(d-1)!}\\
& \int_0^1 ds_i\int_0^1 ds_{i+d}\ s_i^{i-1}(1-s_{i+d})^{n-i-d}(s_{i+d}-s_i)^{d-1}\ \delta[s_{i+d}-s_i=s]\\
&=\frac{n!}{(i-1)!(n-i-d)!(d-1)!}s^{d-1}\int_0^{1-s} ds_i\ s_i^{i-1}(1-s_i-s)^{n-i-d}
\end{align*}
We recognize integral $I_{n',k}(s)$ from Lemma~\ref{lem:useful-integrals} with $n'=n-d$ and $k=n-i-d+1$, so that
\begin{align*}
  f[z_{i+d}-z_i=s]&=\frac{n!}{(i-1)!(n-i-d)!(d-1)!}s^{d-1}\frac{(1-s)^{n-d}}{(n-i-d+1)\binom{n-d}{n-i-d+1}}\\
&=n\binom{n-1}{d-1}s^{d-1}(1-s)^{n-d},
\end{align*}
which coincides with the definition of $f_{n,d-1}[s]$.
\end{proof}

\end{document}